\documentclass[11pt]{article}

\usepackage[margin=1in]{geometry}
\usepackage{setspace}
\usepackage{multirow}
\usepackage{makecell}
\usepackage{booktabs}
\usepackage{afterpage}
\usepackage[bottom]{footmisc}

\usepackage{epigraph}
\usepackage{graphicx}
\usepackage{tikz}
\usetikzlibrary{arrows,automata,patterns,decorations.pathreplacing}
\usepackage{enumitem}
\usepackage{hyperref}
\usepackage{lipsum}

\usepackage[linesnumbered,ruled,vlined,boxed,algo2e]{algorithm2e}
\usepackage{nicefrac}
\usepackage{amsmath}
\usepackage{float}
\usepackage{mathtools}
\usepackage{amsthm}
\usepackage{fullpage}
\usepackage{amsfonts}
\usepackage{clipboard}
\usepackage{thm-restate}
\usepackage[nameinlink]{cleveref}

\newtheorem{lemma}{Lemma}

\newtheorem{claim}{Claim}[lemma]

\newtheorem{property}[lemma]{Property}

\theoremstyle{remark}

\theoremstyle{plain}

\Crefname{enumi}{(item)}{(items)}

\newenvironment{reminder}[1]{\smallskip
\noindent {\bf #1. }\em}{}
\usepackage{xcolor}
\definecolor{DarkGreen}{RGB}{1,50,32}
\usepackage{pgfplots}
\pgfplotsset{compat=1.18}

\usepackage{silence}
\usepackage{chngcntr}
\counterwithin*{equation}{section}
\counterwithin*{equation}{subsection}

\begin{document}
\ActivateWarningFilters[pdftoc]
\newcommand{\defeq}{:=}
\newcommand{\eps}{\varepsilon}

\newcommand{\liam}[1]{{\color{red} \textbf{Liam}: #1}} 
\newcommand{\avi}[1]{{\color{purple} \textbf{Avi}: #1}} 
\newcommand{\new}[1]{{\color{blue} #1}}

\newcommand{\blue}[1]{{\color{blue}#1}}

\newcommand{\ReturnCode}{\textbf{return}}
\newcommand{\codestyle}[1]{\texttt{#1}\xspace}
\newcommand{\Initialize}{\mbox{\codestyle{Initialize}}}
\newcommand{\ImprovedSSSPBalanced}{\codestyle{ImprovedSSSPBalanced}}
\newcommand{\ImprovedSSSP}{\codestyle{ImprovedSSSP}}
\newcommand{\Cycle}{\mbox{\codestyle{Cycle}}}
\newcommand{\ADO}{\mbox{\codestyle{ADO}}}
\newcommand{\hADO}{\mbox{\codestyle{hADO}}}
\newcommand{\Query}{\mbox{\codestyle{.Query}}}
\newcommand{\ADOQuery}{\mbox{\codestyle{ADO.Query}}}
\newcommand{\ConstructADO}{\codestyle{ConstructADO}}
\newcommand{\FastAPSP}{\codestyle{FastAPSP}}
\newcommand{\Construct}{\mbox{\codestyle{Construct}}}
\newcommand{\Spanner}{\codestyle{Spanner}}
\newcommand{\RT}{\mbox{\codestyle{RT}}}
\renewcommand{\L}{\mbox{\codestyle{L}}}
\newcommand{\CycleOdd}{\codestyle{CycleOdd}}
\newcommand{\BallOrCycle}{\codestyle{BallOrCycle}}
\newcommand{\ClusterOrCycleBounded}{\codestyle{ClusterOrCycleBounded}}
\newcommand{\ClusterOrCycle}{\codestyle{ClusterOrCycle}}
\newcommand{\SimpleCycle}{\codestyle{SimpleCycle}}
\newcommand{\Next}{\codestyle{Next}}
\newcommand{\Sample}{\codestyle{Sample}}
\newcommand{\Dijkstra}{\codestyle{Dijkstra}}
\newcommand{\BundleDijkstra}{\codestyle{BundleDijkstra}}
\newcommand{\ActiveBellmanFord}{\codestyle{ActiveBellmanFord}}
\newcommand{\BoundedBellmanFord}{\codestyle{BoundedBellmanFord}}
\newcommand{\BoundSSSP}{\codestyle{BoundSSSP}}
\newcommand{\Preprocess}{\codestyle{Preprocess}}
\newcommand{\HashTable}{\codestyle{HashTable}}
\newcommand{\Heap}{\codestyle{Heap}}
\newcommand{\RelaxNext}{\codestyle{RelaxNext}}

\newcommand{\PreprocessGraph}{\codestyle{Initialize}}
\newcommand{\Route}{\codestyle{Route}}
\newcommand{\TreeRoute}{\codestyle{TreeRoute}}
\newcommand{\N}{\mathbb{N}}
\newcommand{\MinCycle}{\codestyle{MinCycle}}
\newcommand{\Ball}{\textnormal{\texttt{Ball}}\xspace}
\newcommand{\Bundle}{\textnormal{\texttt{Bundle}}\xspace}
\newcommand{\DistanceOracle}{\codestyle{TZ-DistanceOracle}}
\newcommand{\SparseOrCycle}{\codestyle{SparseOrCycle}}
\newcommand{\Intersection}{\codestyle{Intersection}}
\newcommand{\CycleAdditive}{\codestyle{CycleAdditive}}
\newcommand{\GenerateSi}{\codestyle{ComputeS}}

\newcommand{\codeNull}{\codestyle{null}}
\newcommand{\codeYes}{\codestyle{Yes}}
\newcommand{\codeNo}{\codestyle{No}}
\newcommand{\codeAnd}{~ \mathrm{and} ~}
\newcommand{\codeOr}{~ \mathrm{or} ~}
\newcommand{\wt}{\ell}
\newcommand{\id}{\mathrm{id}}
\newcommand{\BMSSP}{\codestyle{BMSSP}}
\newcommand{\Cl}{CL}
\newcommand{\CL}{CL}
\newcommand{\cl}{c\ell}

\newcommand{\EQ}{\;=\;}
\newcommand{\GE}{\;\ge\;}
\newcommand{\Ot}{\tilde{O}}
\newcommand{\stactri}{\stackrel\triangle}

\newcommand{\EE}{\mathbb{E}}
\newcommand{\RR}{\mathbb{R}}

\newcommand{\hop}[1]{^{\mspace{2mu}\raisebox{0.75ex}{\ensuremath{\scriptstyle\le #1}}}}

\DeclarePairedDelimiter{\ceil}{\lceil}{\rceil}
\DeclarePairedDelimiter{\floor}{\lfloor}{\rfloor}
\DeclarePairedDelimiter{\pair}{\langle}{\rangle}

\title{A Faster Undirected Single-Source Shortest Path Algorithm}
\date{}
\author{Avi Kadria\thanks{Department of Computer Science, Bar Ilan University, Ramat Gan 5290002, Israel. E-mail {\tt avi.kadria3@gmail.com}.} \and Liam Roditty\thanks{Department of Computer Science, Bar Ilan University, Ramat Gan 5290002, Israel. E-mail {\tt liam.roditty@biu.ac.il}. Supported in part by BSF grants 2016365 and 2020356.}}
\maketitle
\thispagestyle{empty}

\begin{abstract}

The single-source shortest paths (SSSP) problem in graphs with non-negative edge weights is one of the most classic problems in algorithms. For decades, the best known running time in the comparison-addition model was the $O(m+n\log n)$ bound of Dijkstra's algorithm with Fibonacci heaps.

Recently, Duan, Mao, Shu, and Yin (FOCS'23) gave a randomized $O(m\log^{1/2} n \log\log^{1/2} n)$-time algorithm for SSSP in weighted undirected graphs. For weighted directed graphs, Duan, Mao, Mao, Shu, and Yin (STOC'25) gave an $O(m\log^{2/3} n)$-time algorithm for SSSP. Very recently, Duan, Mao, Shu, and Yin (ICALP'26) obtained an algorithm for directed graphs whose running time matches the $O(m\log^{1/2} n \log\log^{1/2} n)$ time of the undirected case.

In this paper, we present a faster algorithm for SSSP in weighted undirected graphs, giving the first improvement in running time since the FOCS'23 breakthrough of Duan, Mao, Shu, and Yin. Our algorithm runs in $O(m\log^{1/2} n \log\log^{1/4} n \log\log\log^{1/4} n)$ time, improving the previous running time by a factor of $(\frac{\log\log n}{\log\log\log n})^{1/4}$. 

Our main contribution is a simple and efficient tool that computes, for every vertex, its distance to the nearest vertex in a random sample; this tool may be of independent interest.

\end{abstract}
\clearpage
\pagenumbering{arabic} 
\newpage

\section{Introduction}
Computing single-source shortest paths (SSSP) in a weighted graph $G=(V,E,\wt)$ is one of the most classic problems in algorithms, studied for over half a century~\cite{bellman1958routing, Di59, moore1959shortest, Spira73}.
Until recently, the best known running time, in the comparison-addition model, was $O(m+n\log{n})$ using Dijkstra's famous algorithm with Fibonacci heaps~\cite{fredman1987fibonacci}.\footnote{For bounded integer weights, faster algorithms have been known for more than two decades~\cite{dial1969algorithm,ahuja1990faster,thorup1999undirected,hagerup2000improved,thorup2004integer, pettie2005shortest}.}

The $n\log n$ term in this bound seems hard to avoid: Dijkstra's algorithm visits the vertices in increasing order of distance, and hence produces a sorted ordering of the distances, a task that requires $\Omega(n\log n)$ comparisons in the worst case. Moreover, Haeupler, Hlad\'ik, Rozho\v{n}, Tarjan, and T\v{e}t\v{e}k~\cite{haeupler2024universal} showed that Dijkstra's algorithm (with a suitable heap) is \emph{universally optimal} for this ordering task. 
Thus, breaking the so-called \emph{sorting barrier} requires a fundamentally different approach: computing the distances without sorting them.

Since $\Omega(m)$ time is necessary even for graph reachability, the $m$ term in Dijkstra's running time is unavoidable.
All recent work on this question has focused on the sparse-graph regime, where $m=O(n)$ and the $n\log n$ term dominates the running time. Therefore, throughout the paper, we assume $m=O(n)$ and discuss general $m$ in \Cref{S-discussion}.

The first to break the sorting barrier were Duan, Mao, Shu, and Yin~\cite{duan2023randomized}, who gave a randomized $O(m\log^{1/2}{n}\log\log^{1/2}{n})$ time algorithm for sparse weighted \emph{undirected} graphs.\footnote{For general $m$ the bound is refined: $O(n\sqrt{\log n\log\log n})$ when $m\le n\log\log n$, and $O(\sqrt{mn\log n})$ when $n\log\log n\le m\le n\log n$.} 
Recently, Yan~\cite{yan2026lossless} derandomized this algorithm and achieved the same running time deterministically.

In a subsequent breakthrough, Duan, Mao, Mao, Shu, and Yin~\cite{duan2025breaking}\footnote{Winner of the STOC'25 Best Paper Award.} broke the sorting barrier for sparse directed graphs as well and presented a complex recursive algorithm for SSSP that runs in $O(m\log^{2/3}{n})$ time.
Very recently, this result was improved by Duan, Mao, Shu, and Yin~\cite{duan2026faster} to $O(m\log^{1/2}{n}\log\log^{1/2}{n})$ time,\footnote{For general $m$ the bound is $O(m\sqrt{\log n}+\sqrt{mn\log n\log\log n})$.} which matches the running time in undirected graphs \cite{duan2023randomized}.

\paragraph{Our results.}
In this paper, we improve the state-of-the-art running time for SSSP in sparse weighted undirected graphs, giving the first improvement in running time since~\cite{duan2023randomized}. 
\begin{restatable}{theorem}{theoremImprovedUndirected}\label{T-Improved-undirected}
    There is a randomized algorithm that w.h.p.\ runs in
    $O(m\log^{1/2}{n}\log\log^{1/4}{n}\log\log\log^{1/4}{n})$
    time and computes single-source shortest paths in undirected graphs with non-negative edge weights.
\end{restatable}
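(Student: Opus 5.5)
We follow the framework of Duan, Mao, Shu, and Yin~\cite{duan2023randomized}: sample a set $S$ of vertices, compute bundles, and run a \BundleDijkstra{} that only extracts sampled vertices from the heap. Each $v\notin S$ is attached to its nearest sampled vertex $b(v)$, and its distance is recovered by relaxing inside the small \Ball{}s around bundle members. With sampling probability $1/k$, the original algorithm pays $O(m\log k)$ to build the balls, because each ball is grown by a truncated Dijkstra of size $O(k)$ with heap operations costing $O(\log k)$. It pays $O(\frac{n}{k}\log n)$ for the main heap. Balancing these two terms gives $k\approx\sqrt{\log n\log\log n}$. Our plan is to make the first term cheaper, so that a larger $k$ becomes affordable.

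\textbf{Step 1 (the new tool).} We want to compute, for every vertex $v$, the distance $d(v,S)$ to the nearest sampled vertex, together with the witness $b(v)$. The target cost is $O(m\sqrt{\log k\,\log\log k})$ rather than $O(m\log k)$. We would achieve this by applying the bundle idea recursively, one level down. Sample a second set $S'$ with probability $1/k'$. Compute $d(\cdot,S')$ by a multi-source search whose truncated explorations have size about $k'$. Then run a multi-source Dijkstra from $S$ in which only $S'$-vertices enter the heap, and non-$S'$ vertices are settled via their small balls. The heap then has size about $n/k'$, but its keys only need to be ordered among nearby candidates of size $\mathrm{poly}(k)$. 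Heap operations therefore cost $O(\log k)$ on $n/k'$ elements, while ball construction costs $O(m\log k')$. Setting $k'\approx\sqrt{\log k\log\log k}$ gives $O(m\log\log k + \frac{n}{k'}\log k)$ for this phase.

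\textbf{Step 2 (balls and main SSSP).} Once $d(v,S)$ is known, $\Ball(v)=\{u: d(v,u)<d(v,S)\}$ can be built by a Dijkstra truncated at the known radius. We would like to avoid a full sort here. Truncated Dijkstra still sorts up to $k$ items, but with the radius fixed we can use a bounded exploration whose heap stays small. This reduces the ball phase to $O(m\sqrt{\log k\log\log k})$ plus lower-order terms. The main bundled Dijkstra then costs $O(\frac{n}{k}\log n + m\sqrt{\log k\log\log k})$. Choosing $k=\sqrt{\log n}\,/(\log\log n\log\log\log n)^{1/4}$ balances the terms to $O(m\log^{1/2}n\,(\log\log n)^{1/4}(\log\log\log n)^{1/4})$, which is the bound in \Cref{T-Improved-undirected}. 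The w.h.p.\ statement follows from Chernoff bounds on ball sizes and a union bound.

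\textbf{Main obstacle.} The hardest step is Step 1, and in particular correctness. When only sampled vertices $S'$ enter the heap, a vertex $v$ whose nearest-$S$ path passes through non-$S'$ vertices must still get the correct value. Undirectedness is what should make this work: the prefix of that path near $v$ lies inside $\Ball(v)$ with respect to $S'$, so relaxing from the balls of bundle members recovers $d(v,S)$. The difficulty is charging the extra relaxations so that each edge is touched $O(1)$ times in expectation. If this recursive scheme turns out to be too expensive, the fallback is a simpler charging argument. Each vertex would be explored only until it meets any vertex of $S$, and each exploration would cost $O(k)$ expected edges with comparisons organized by a bucketed structure of depth $\sqrt{\log k\log\log k}$.
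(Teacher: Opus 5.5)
Your top-level decomposition matches the paper's: compute $\delta(v,b(v))$ and the bundles, grow each ball up to the now-known radius, run \BundleDijkstra, then rebalance $k$. However, both steps that actually produce the improvement are missing. Step~1 rests entirely on the claim that heap operations on the $\Theta(n/k')$ vertices of $S'$ cost only $O(\log k)$, because keys ``only need to be ordered among nearby candidates''. A multi-source Dijkstra from $S$ extracts the $S'$-vertices in increasing order of $\delta(\cdot,S)$. It therefore sorts $\Theta(n/k')$ values that the edge weights can make essentially arbitrary. That is the sorting barrier itself, so no priority queue can make it cost $o(\log n)$ per element in general. With an honest heap, the term $(n/k')\log n$ is nearly $n\log n$ for your tiny $k'$. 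Adding the $O(nk'\log k')$ truncated searches you propose for $\delta(\cdot,S')$, the scheme costs $\Omega(n\sqrt{\log n\log\log n})$ for every $k'$, which is the very bound you are trying to beat.

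The missing idea is to avoid ordering altogether. The paper runs active-vertex Bellman--Ford (\ActiveBellmanFord) from a super-source joined to $S$ by weight-$0$ edges. A vertex is active only in the round after its label decreased, and $d[v]$ is final after at most $|P(v,b(v))|$ rounds. Every vertex of $P(v,b(v))$ other than $b(v)$ is strictly closer to $v$ than $b(v)$, so $P(v,b(v))\subseteq\Ball(v)$. Hence $v$ is active $O(|\Ball(v)|)$ times, each activation relaxes $O(m/n)$ edges after degree reduction, and the total is $O(\frac{m}{n}\sum_v|\Ball(v)|)=O(mk)$ w.h.p.\ (\Cref{T-active-BF}). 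Your fallback does not fill this hole. Bucketing is unavailable for real weights in the comparison-addition model, and a per-vertex search with an unknown stopping radius is exactly what costs $\log k$ per element in \cite{duan2023randomized}.

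Step~2 has the same problem. A Dijkstra truncated at the known radius $\delta(v,b(v))$ still outputs $\Ball(v)$ in sorted order. It pays $\Theta(\log|\Ball(v)|)$ per extraction however small its heap, which is precisely the $O(mk\log k)$ you set out to beat. In the paper, the factor $\log^{1/2}|\Ball(v)|\log\log^{1/2}|\Ball(v)|$ per ball vertex comes from the sorting-barrier-breaking bounded search \BMSSP of \cite{duan2026faster}. Using it requires handling its partial executions, and its parameters $t,\ell$ must be tuned to the unknown $|\Ball(v)|$. The paper handles both with a doubling search over $t$ and $\ell$ (\Cref{L-Faster-D-SSSP}).

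Your cost accounting is also off by a factor $k$ throughout:
\begin{itemize}
\item the ball phase of \cite{duan2023randomized} costs $O(mk\log k)$, not $O(m\log k)$;
\item the new ball phase costs $O(mk\log^{1/2}k\log\log^{1/2}k)$;
\item \BundleDijkstra costs $O(mk+\frac{m}{k}\log n)$.
\end{itemize}
With the terms as you wrote them, your choice of $k$ balances nothing; it is right only for the corrected expression.

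Finally, ``Chernoff bounds and a union bound'' does not give the w.h.p.\ claim. Balls of size $\Theta(k\log n)$ occur w.h.p.\ (e.g., on a path), so a per-ball bound loses a $\log n$ factor. You need concentration of $\sum_v|\Ball(v)|$ and of $\sum_v|\Ball(v)|\log^{1/2}|\Ball(v)|\log\log^{1/2}|\Ball(v)|$ despite the dependence among ball sizes. The paper obtains this from the bounded-dependence lemma (\Cref{L-dependence}), after truncating ball sizes at $\Theta(k\log k)$ and bounding the rare larger balls separately (\Cref{L-sampling}).
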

The algorithm of \Cref{T-Improved-undirected} is faster than the algorithm of~\cite{duan2023randomized} by a factor of $(\frac{\log\log{n}}{\log\log\log{n}})^{1/4}$. 
We summarize the previous results alongside our new result in \Cref{tab:comparison}.

\afterpage{%
\begin{table}[t]
\centering
\begin{tabular}{llll}
\toprule
Reference & Graph type & Randomized? & Running time \\
\midrule
Dijkstra & directed & deterministic & $O(m+n\log{n})$ \\
\cite{duan2023randomized} & undirected & randomized & $O(m\log^{1/2}{n}\log\log^{1/2}{n})$ \\
\cite{duan2025breaking} & directed & deterministic & $O(m\log^{2/3}{n})$ \\
\cite{yan2026lossless} & undirected & deterministic & $O(m\log^{1/2}{n}\log\log^{1/2}{n})$ \\
\cite{duan2026faster} & directed & deterministic & $O(m\log^{1/2}{n}\log\log^{1/2}{n})$ \\
\textbf{This paper} & undirected & randomized & $O(m\log^{1/2}{n}\log\log^{1/4}{n}\log\log\log^{1/4}{n})$ \\
\bottomrule
\end{tabular}
\caption{Comparison of algorithms for SSSP in sparse ($m=O(n)$) weighted graphs in the comparison-addition model.}
\label{tab:comparison}
\end{table}%
}

\paragraph{Techniques.}
The algorithm of Duan, Mao, Shu, and Yin~\cite{duan2023randomized} is built on two data structures with respect to a random sample $S\subseteq V$ of size $\Theta(n/k)$, where $k$ is a parameter of the algorithm, eventually set to roughly $\log^{1/2}{n}$.
Let $b(v)$ be the nearest sample to $v$.
The first data structure is the \emph{bundle} of a sample $s\in S$, defined as $\Bundle(s)=\{v\in V:b(v)=s\}$, the set of vertices whose nearest sample is $s$.
The second data structure is the \emph{ball} of a vertex $v$, consisting of its nearest sample $b(v)$ and all vertices closer to $v$ than $b(v)$ (see \Cref{fig:bundle-ball} for an illustration).
In~\cite{duan2023randomized}, both data structures are computed by running a bounded Dijkstra from each vertex until the first sample is reached; this costs $O(mk\log k)$ time over all vertices.

Surprisingly, the dominant cost of~\cite{duan2023randomized} is this preprocessing step, and not the shortest-path computation that follows.
The extra $\log k$ factor is precisely what produces the $\log\log^{1/2}n$ term in the running time, and improving this factor is the key to our result.


Instead of $n$ local bounded Dijkstra searches, each terminating when the nearest sample is reached, we run a single global active-vertex Bellman--Ford algorithm~\cite{moore1959shortest} from an auxiliary super-source connected to all samples, which simultaneously computes all distances to $S$.
One might ask why such a simple change was not already made in~\cite{duan2023randomized}: the reason it appears not to work is that a random sample leaves some balls of size $\Theta(k\log n)$ w.h.p. Therefore, running Bellman--Ford from a super-source takes $O(mk\log n)$ time, a $\log n$ factor above our goal.
Our main technical contribution is to show that if one restricts the relaxation only to \emph{active} vertices then w.h.p.\ the total running time of the algorithm is $O(mk)$, improving upon both Bellman--Ford ($O(mk\log n)$) and Dijkstra ($O(m+n\log n)$) from a super-source.

\begin{restatable}[Sorting-free distances to a sampled set]{theorem}{thmActiveBF}\label{T-active-BF}
    Let $G$ be a weighted undirected graph with maximum degree $O(m/n)$ (guaranteed by a standard degree reduction, \Cref{S-degree-reduction}) and let $S\subseteq V$ be any non-empty set. Active-vertex Bellman--Ford (\Cref{alg:active-bf}) computes $\delta(v,b(v))$ and $\Bundle(s)$ for every $v\in V$ and $s\in S$ in $O\!\left(\sum_{v\in V}|\Ball(v)|\cdot \tfrac{m}{n}\right)$ time.
\end{restatable}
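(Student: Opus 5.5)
We need to fix the algorithm first, since Algorithm~\ref{alg:active-bf} is not shown before the statement. We take it to be the natural one. Initialize $d(s)=0$ and $b(s)=s$ for $s\in S$, and $d(v)=\infty$ otherwise. The active set $A_0$ is $S$. In round $i$, for every $u\in A_{i-1}$ and every edge $(u,w)$, if $d(u)+\wt(u,w)<d(w)$ we set $d(w)\gets d(u)+\wt(u,w)$ and $b(w)\gets b(u)$, and we put $w$ into $A_i$. The algorithm stops when $A_i=\emptyset$. Bundles are read off from the final $b(\cdot)$ in $O(n)$ time. Ties are broken consistently, e.g.\ lexicographically by (distance, hop count, sample id), so that $b(v)$ is well defined.

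\textbf{Correctness} is the standard Bellman--Ford argument applied to the graph with a super-source joined to $S$ by zero-weight edges. By induction on $i$, after round $i$ every $v$ has $d(v)\le$ the shortest distance to $S$ over paths with at most $i$ edges. The only subtle point is that only active vertices relax their edges. This is fine because a vertex whose value did not change in round $i-1$ already relaxed its outgoing edges with that same value in an earlier round. Termination follows since shortest paths are simple, so at most $n$ rounds are needed. Correctness of $b(v)$ follows from the same induction carried with the tie-breaking key.

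\textbf{Running time.} Each activation of a vertex $u$ costs $\deg(u)=O(m/n)$ by the degree bound, plus $O(1)$ bookkeeping per round. It therefore suffices to show that each vertex $v$ is activated at most $|\Ball(v)|$ times, or $O(|\Ball(v)|)$ times. We charge each activation of $v$ in round $i$ to the value $d_i(v)$, which is the length of some walk from some sample $s$ to $v$. The values $d_i(v)$ strictly decrease over the rounds in which $v$ is activated.

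The key claim is a map from activations of $v$ to distinct vertices of $\Ball(v)$. When $v$ is activated in round $i$, its new value is realized by a path $P_i$ from a sample $s_i$ with exactly $i$ hops that is a shortest $\le i$-hop path to $S$. Take the vertex $x_i$ on $P_i$ adjacent to $s_i$, or more robustly the last vertex of $P_i$ before reaching $S$ when read from $v$. We would like to argue that $x_i$, or rather the sample-side structure, lies in $\Ball(v)$.

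A cleaner candidate is to count the distinct values $d_i(v)>\delta(v,S)$. Each activation after the first corresponds to a strictly better walk. Take $P_i$ minimal in hops, so it is simple and every prefix is hop-optimal. All internal vertices $y$ of $P_i$ then satisfy $\delta(v,y)\le |P_i|-\ell(s_i,y)\le \ldots$.

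The concrete attempt I would make is to associate with activation $i$ the vertex $y_i$ on the true shortest path $Q$ from $v$ to $b(v)$ that lies at hop distance $i$ from $b(v)$. Activations of $v$ happen only in rounds $i\le$ hops$(Q)$ plus the rounds with the improvements coming from other samples. Hmm, this must be refined. The refined argument is as follows. Every path $P_i$ certifies a sample $s_i$ with $\delta(v,s_i)\le d_i(v)$. Its vertices $y$ with $\delta(v,y)<\delta(v,b(v))$ are in $\Ball(v)$. The $i$-th hop-vertex of $Q$ from $v$'s side is in $\Ball(v)$, and $v$ can be improved in round $i$ only if the hop-$i$ optimum is strictly smaller than the hop-$(i-1)$ optimum. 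That optimum is bounded by the distance, so the number of improvements is at most the number of hops of $Q$, hence at most $|\Ball(v)|$.

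The hop count of $Q$, chosen with the fewest hops, is at most $|\Ball(v)|$, since all vertices of $Q$ except $b(v)$ are strictly closer to $v$ than $b(v)$ (assuming generic distances, or handling ties via the tie-breaking key). A vertex is never activated after round hops$(Q_v)$, because $d(v)$ already equals $\delta(v,S)$ by then and cannot decrease further. So $v$ is activated at most hops$(Q_v)\le|\Ball(v)|$ times. Summing over $v$ gives $\sum_{u\in A_i,\,i}\deg(u)=O\big(\tfrac mn\sum_v|\Ball(v)|\big)$, plus $O(n)$ for initialization, which is dominated because $|\Ball(v)|\ge1$.

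The main obstacle I expect is the tie handling, so that "fewest-hop shortest path inside $\Ball(v)$" is rigorous when distances are equal. This is needed both to guarantee $d(v)$ is final by that round and to avoid extra activations from equal-length reassignments of $b(v)$. I would resolve it by comparing the lexicographic keys (distance, hops) consistently throughout.
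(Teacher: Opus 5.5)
Your proposal is correct and follows the paper's proof. It uses the same super-source Bellman--Ford induction showing that $d(v)$ is at most the $i$-hop distance to $S$ after round $i$, with the same observation that a non-active vertex already relaxed its edges with its current value. It then uses $P(v,b(v))\subseteq\Ball(v)$ under the tie-breaking order, so $d(v)$ is final after $O(|\Ball(v)|)$ rounds and $v$ is active $O(|\Ball(v)|)$ times, at cost $O(m/n)$ each.

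You should cut the exploratory detours. In particular, drop the claim that $v$ can improve in round $i$ only if the $i$-hop optimum is strictly below the $(i-1)$-hop optimum: in-place updates within a round can violate it. Your closing argument (at most one activation per round, none after round $\mathrm{hops}(Q_v)\le|\Ball(v)|$) is exactly the paper's and suffices on its own.
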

For a random sample, the sum of all ball sizes is $O(nk)$ with high probability (\Cref{L-sampling}), giving the $O(mk)$ running time.

Using the computed distances to $S$ as radii, the bounded-distance Dijkstra of~\cite{duan2026faster} then computes all balls $\{\Ball(v)\}_{v\in V}$ in total time $O(mk\log^{1/2}k\log\log^{1/2}k)$, yielding \Cref{T-Improved-undirected}.
With the bundle construction now costing only $O(mk)$, the ball construction becomes the bottleneck of the entire algorithm, and any further improvement to it immediately improves the overall running time.
In particular, constructing all balls in $O(mk)$ total time would yield an $O(m\log^{1/2}{n})$ time algorithm for SSSP (see \Cref{S-discussion}).

\paragraph{A sorting-free tool of independent interest.}
Active-vertex Bellman--Ford, the queue-based variant of Bellman--Ford~\cite{moore1959shortest} that is widely known as SPFA (Shortest Path Faster Algorithm)~\cite{duan1994spfa}, is \emph{not} a fast algorithm in general: on arbitrary inputs it can take $\Theta(mn)$ time, far more than Dijkstra's algorithm.
Our key insight is that for distances to a sampled set, each vertex $v$ is activated at most $|\Ball(v)|$ times, so that a random sample keeps the total work at $O(mk)$; to the best of our knowledge, this is the first instance-dependent bound under which this algorithm provably outperforms a Dijkstra-based approach.
Moreover, computing the distances to a sampled set is a basic primitive that lies at the core of many algorithms, for example, the constructions of approximate distance oracles~\cite{DBLP:journals/jacm/ThorupZ05}, compact routing schemes~\cite{DBLP:conf/spaa/ThorupZ01}, and spanners~\cite{DBLP:conf/icalp/RodittyTZ05}; in particular, the balls essentially coincide with (one level of) the \emph{bunches} of Thorup and Zwick~\cite{DBLP:journals/jacm/ThorupZ05}.
Since our bound improves the $O(mk\log k)$ time of the per-vertex bounded Dijkstra used by~\cite{duan2023randomized} to $O(mk)$, we believe this tool may be of independent interest.

\paragraph{Organization.}
\Cref{S-prelim} contains preliminaries. \Cref{S-small-improve} presents the algorithm in two stages: first under a simplifying \emph{balanced case} assumption on the random sample (\Cref{S-construction}), and then in full generality, using active-vertex Bellman--Ford (\Cref{S-Active-BF}). The concentration bounds for a random sample (\Cref{L-sampling}) are proved in \Cref{A-sampling}. We conclude in \Cref{S-discussion} with a discussion of general $m$ and with open problems.

\section{Preliminaries}\label{S-prelim}
Let $G=(V,E,\wt)$ be a weighted undirected graph with $\wt:E\to\mathbb{R}_{\ge 0}$, $n$ vertices, and $m$ edges.
Throughout the paper, we focus on the case that $m=O(n)$, in which the sorting barrier dominates.
We work in the comparison-addition model: the only allowed operations on weights are additions and comparisons, each taking $O(1)$ time.
We assume that $G$ is connected (hence $m\ge n-1$); this is without loss of generality, since single-source shortest paths are computed only within the connected component of the source, which can be computed in $O(m)$ time.
To avoid degenerate expressions, throughout the paper $\log{x}$ denotes $\max\{\log_2{x},\,1\}$; in particular, $x\mapsto\log^{1/2}(x)\log\log^{1/2}(x)$ is well defined, at least $1$, and non-decreasing for all $x\ge 1$.
We say that an event holds \emph{with high probability} (w.h.p.) if, for every constant $c\ge 1$, it holds with probability at least $1-O(n^{-c})$.

\subsection{Reduction to bounded degree graphs}\label{S-degree-reduction}
To obtain a degree bound of $\Delta\in [3,m/n]$, we use the following standard reduction.
Given a graph $G$ with $m$ edges and $n$ vertices, we preprocess $G$ in $O(m)$ time into a graph with $O(m)$ edges, $O(m/\Delta)$ vertices, and maximum degree bounded by $\Delta$, while preserving all shortest path lengths. A classical way to achieve this~\cite{frederickson1983data} works as follows:

For every vertex $v$ with degree $\deg(v) > \Delta$, we replace $v$ with a cycle $C_v$ of $\ceil{\frac{\deg(v)}{\Delta-2}}$ vertices.
Each vertex in $C_v$ has two weight-$0$ edges, to the next and previous vertices in $C_v$, and at most $\Delta-2$ edges to distinct neighbors of $v$, so its degree is at most $\Delta$. This preserves all shortest path lengths. See \Cref{fig:degree-reduction} for an illustration.
This degree bound is crucial for the efficiency of our active-vertex Bellman--Ford algorithm in \Cref{S-Active-BF}, as it guarantees that each active vertex triggers only $O(m/n)$ edge relaxations.
Throughout the paper, we assume the input graph has been preprocessed by this reduction with $\Delta=m/n$, so that it has $O(n)$ vertices, $O(m)$ edges, and maximum degree $m/n$.
\begin{figure}[t]
\centering
\begin{tikzpicture}[
    vert/.style  = {circle, draw=black, fill=white, minimum size=16pt, inner sep=0pt, font=\footnotesize},
    cvert/.style = {circle, draw=black, fill=gray!20, minimum size=15pt, inner sep=0pt, font=\scriptsize},
    nbr/.style   = {circle, draw=black, fill=white, minimum size=15pt, inner sep=0pt, font=\scriptsize},
    redge/.style = {thick, shorten >=1pt, shorten <=1pt},
    zedge/.style = {thick, gray!70, shorten >=1pt, shorten <=1pt},
]

\begin{scope}
\node[vert] (v) at (0,0) {$v$};
\foreach \i/\ang/\w in {1/135/10, 2/45/20, 3/-45/30, 4/-135/40} {
    \node[nbr] (n\i) at (\ang:2.6) {$x_{\i}$};
    \draw[redge] (v) -- (n\i)
        node[midway, sloped, above, font=\scriptsize] {$\w$};
}
\end{scope}

\begin{scope}[xshift=7.5cm]
\foreach \i/\ang in {1/135, 2/45, 3/-45, 4/-135}
    \node[cvert] (c\i) at (\ang:0.95) {};
\foreach \a/\b/\pos in {1/2/above, 2/3/right, 3/4/below, 4/1/left}
    \draw[-,zedge] (c\a) -- (c\b)
        node[midway, \pos, font=\scriptsize, text=gray!60!black] {$0$};
\node[font=\scriptsize, gray!60!black] at (0,0) {$C_v$};

\node[nbr] (m1) at (135:2.6) {$x_{1}$};
\node[nbr] (m2) at ( 45:2.6) {$x_{2}$};
\node[nbr] (m3) at (-45:2.6) {$x_{3}$};
\node[nbr] (m4) at (-135:2.6) {$x_{4}$};
\foreach \i/\w in {1/10, 2/20, 3/30, 4/40} \draw[-,redge] (c\i) -- (m\i)
    node[midway, sloped, above, font=\scriptsize] {$\w$};

\end{scope}

\draw[->, thick] (3.0,0) -- (4.8,0);
\node[font=\small, above] at (3.9,0.12) {Degree reduction};

\end{tikzpicture}
\caption{%
    Illustration of the degree reduction of \Cref{S-degree-reduction}.
}
\label{fig:degree-reduction}
\end{figure}

\subsection{Total order on paths}\label{S-positive-weights}
Following~\cite[Section~2.3]{duan2026faster} (see also~\cite[Section~2]{duan2025breaking}), we compare paths by a total order $\prec$ that refines comparison by weight.
Such a refinement is necessary because distinct paths may have equal weight, especially with the weight-$0$ edges introduced by \Cref{S-degree-reduction}.

Fix an arbitrary vertex numbering $\id:V\to[n]$. For a path $P=(v_0,v_1,\dots,v_h)$, define its \emph{key}
$$\kappa(P)=\bigl(\wt(P),\;h,\;\id(v_h),\id(v_{h-1}),\dots,\id(v_0)\bigr),$$
and write $P\prec Q$ when $\kappa(P)$ is lexicographically smaller than $\kappa(Q)$.
When $\wt(P)=\wt(Q)$ and $|P|=|Q|$, the identifier sequences have equal length, so $\kappa(P)$ uniquely identifies $P$. \footnote{
Equivalently, $\prec$ treats each edge as having weight $\wt(e)+\eps$ for an infinitesimal $\eps>0$, breaking remaining ties by identifiers. Thus under $\prec$, every edge (even of weight $0$) strictly increases the key, so proper subpaths are strictly $\prec$-smaller, which is what we mean when we say a subpath is \emph{strictly shorter}.}

\begin{property}\label{P-total-order}
    For every $u\in V$, the relation $\prec$ is a total order on the paths starting at $u$, and:
    \begin{enumerate}
        \item\label{P-order-refine} it refines weight: $\wt(P)<\wt(Q) \implies P\prec Q$;
        \item\label{P-order-suffix} it is preserved by appending a common suffix: if $P\prec Q$ are paths from $u$ to $x$ and $R$ is a path starting at $x$, then $P\circ R\prec Q\circ R$;
        \item\label{P-order-prefix} proper prefixes are strictly smaller: if $P$ is a proper prefix of $Q$, then $P\prec Q$ (even when $\wt(P)=\wt(Q)$).
    \end{enumerate}
\end{property}
\begin{proof}
    Distinct paths have distinct keys, so $\prec$ is a total order; \ref{P-order-refine} holds as $\wt$ is the first coordinate.
    For~\ref{P-order-suffix}, appending $R$ adds $\wt(R)$ and $|R|$ to both keys and prepends the same suffix identifiers, preserving the first differing coordinate.
    For~\ref{P-order-prefix}, $\wt(P)\le\wt(Q)$ by non-negativity; if $\wt(P)=\wt(Q)$, then $|P|<|Q|$.
\end{proof}

We let $P_G(u,v)$ be the $\prec$-minimum path from $u$ to $v$, and $\delta_G(u,v)$ its weight.
When the graph is clear from context we drop the subscript and write $\delta$ for $\delta_G$ and $P$ for $P_G$.
We write $\delta(u,x)\prec\delta(u,y)$ as shorthand for $P(u,x)\prec P(u,y)$.
All distance comparisons from a common vertex are resolved by $\prec$, ensuring unambiguous choices for $b(v)$ and $\Ball(v)$ in \Cref{S-balls-bundles}, the relaxations in \Cref{alg:bounded-bf,alg:active-bf}, and the bound $B$ in \BoundSSSP (\Cref{L-Faster-D-SSSP}).

\begin{property}\label{P-shortest-paths}
    Let $u\in V$. Then:
    \begin{enumerate}
        \item\label{P-sp-simple} $P(u,v)$ is uniquely defined and simple (no shortest path traverses a cycle);
        \item\label{P-sp-tree} every prefix of $P(u,v)$ is the $\prec$-minimum path to its endpoint, so $\{P(u,v)\}_{v\in V}$ form a tree rooted at $u$;
        \item\label{P-sp-distinct} distances $\{\delta(u,v)\}_{v\in V}$ are pairwise distinct under $\prec$; in particular $b(v)$ in \Cref{S-balls-bundles} is unique;
        \item\label{P-sp-prefix} if $x\neq v$ lies on $P(u,v)$, then $\delta(u,x)\prec\delta(u,v)$.
    \end{enumerate}
\end{property}
\begin{proof}
    Removing a cycle from a walk does not increase its weight and strictly decreases its edge count, yielding a strictly $\prec$-smaller walk; thus the $\prec$-minimum walk from $u$ to $v$ exists, is unique, and is a simple path (\ref{P-sp-simple}).
    For~\ref{P-sp-tree}, if $P$ is the prefix of $P(u,v)$ to $x$, $R$ is the suffix, and $P'\prec P$, then $P'\circ R\prec P(u,v)$ by \Cref{P-total-order}(\ref{P-order-suffix}); removing cycles from $P'\circ R$ only further decreases it under $\prec$, contradicting the minimality of $P(u,v)$.
    Item~\ref{P-sp-distinct} holds because paths to distinct endpoints are distinct, hence strictly ordered under $\prec$.
    Finally, item~\ref{P-sp-prefix} follows from~\ref{P-sp-tree} and \Cref{P-total-order}(\ref{P-order-prefix}), as $P(u,x)$ is a proper prefix of $P(u,v)$.
\end{proof}

As observed in~\cite[Section~2.3]{duan2026faster}, the order $\prec$ respects the comparison-addition model and incurs $O(1)$ overhead: when relaxing an edge $(x,v)$, both the candidate path and the current path to $v$ share the destination $v$, so ties are resolved in $O(1)$ time by comparing weights, edge counts, and predecessor identifiers ($x$ versus the current predecessor of $v$).

\subsection{Bellman--Ford algorithm}

We denote by $P_G^{\le k}(u,v)$ the shortest among the paths between $u$ and $v$ that contain at most $k$ edges, and by $\delta_G^{\le k}(u,v)$ its length.
For a path $P$, we denote by $|P|$ the number of edges of $P$.

The classic Bellman--Ford algorithm~\cite{bellman1958routing} computes $\delta_G^{\le k}(u,v)$ from a fixed source $u$ to every vertex $v\in V$ in $O(mk)$ time. We recall its proof because, in \Cref{S-Active-BF}, we present a variant with different properties.

\begin{lemma}[Bellman--Ford algorithm~\cite{bellman1958routing}]\label{L-bellman1958routing}
    Given a weighted undirected graph $G$ and a source vertex $u$, there is an algorithm that computes $\delta_G^{\le k}(u, v)$ for every $v\in V$ in $O(mk)$ time.
\end{lemma}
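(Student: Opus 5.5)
We prove the lemma with the standard round-based Bellman--Ford algorithm and an induction on the number of rounds. We maintain a distance estimate $d_i(v)$ for every vertex $v$ after round $i$. Initially $d_0(u)=0$ and $d_0(v)=\infty$ for $v\neq u$. In round $i=1,\dots,k$ we first copy $d_i(v)\gets d_{i-1}(v)$ for all $v$. Then, for every edge $\{x,y\}\in E$ and in both directions, we relax it using only the \emph{previous} round's values:
$$d_i(y)\gets\min\{d_i(y),\,d_{i-1}(x)+\wt(x,y)\}.$$
Ties are broken by $\prec$ as described in \Cref{S-positive-weights}. Each round takes $O(m+n)$ time. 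Since $G$ is connected, $m\ge n-1$, so the $k$ rounds cost $O(mk)$ in total.

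For correctness we prove by induction on $i$ the invariant $d_i(v)=\delta_G^{\le i}(u,v)$ for every $v$.

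\emph{Base case.} For $i=0$ the only path with at most $0$ edges is the trivial path from $u$ to itself, so the invariant holds.

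\emph{Upper bound on $d_i(v)$.} Let $P=P_G^{\le i}(u,v)$. If $|P|\le i-1$, then $d_i(v)\le d_{i-1}(v)=\delta_G^{\le i-1}(u,v)=\wt(P)$. Otherwise let $(x,v)$ be the last edge of $P$. The prefix of $P$ ending at $x$ has at most $i-1$ edges. By induction, $d_{i-1}(x)$ is at most its weight. Relaxing $(x,v)$ therefore gives $d_i(v)\le d_{i-1}(x)+\wt(x,v)\le\wt(P)$.

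\emph{Lower bound on $d_i(v)$.} Every finite value $d_i(v)$ is either $d_{i-1}(v)$ or of the form $d_{i-1}(x)+\wt(x,v)$. In the second case it is the weight of a walk from $u$ to $v$ with at most $i$ edges: by induction $d_{i-1}(x)$ is realized by a path with at most $i-1$ edges, and we extend it by one edge. That walk's weight is at least $\delta_G^{\le i}(u,v)$. The same holds in the first case, since $\delta_G^{\le i}(u,v)\le\delta_G^{\le i-1}(u,v)$.

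\emph{Main subtlety.} The delicate step is using only the values $d_{i-1}$ within round $i$. If we relaxed in place, an estimate could propagate along several edges in one round. That only produces smaller values, so a sandwich argument still goes through, but the hop count of the realizing walk would no longer be bounded by $i$. Keeping the two arrays $d_{i-1}$ and $d_i$ separate (equivalently, storing a hop counter with each estimate) makes the bound on the number of edges exact. It costs only $O(n)$ extra space and time per round.
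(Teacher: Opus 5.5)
Your proposal is correct and follows the paper's proof essentially step for step. Both use round-based Bellman--Ford with separate arrays $d_{i-1},d_i$ and induct on $i$, getting the upper bound from the last edge of $P_G^{\le i}(u,v)$ and the lower bound from the walk realized by $d_{i-1}[x]+\wt(x,v)$; as the paper notes explicitly, non-negative weights let you shortcut that walk to a path with at most $i$ edges, a step you leave implicit. Your closing remark on in-place relaxation matches the paper's own addendum, which records the sandwich $\delta_G(u,v)\le d^{(i)}[v]\le\delta_G^{\le i}(u,v)$; that in-place form is the one actually used later for \Cref{alg:bounded-bf,alg:active-bf}.
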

\begin{proof}
    The algorithm works as follows. Maintain arrays $d_0,d_1,\dots,d_k$, where $d_0[u]=0$ and $d_0[v]=\infty$ for all $v\neq u$.
    In iteration $i$ (for $1\le i\le k$), set for every $v\in V$:
    $$d_i[v] \leftarrow \min\left(d_{i-1}[v],\ \min_{(x,v)\in E}\left(d_{i-1}[x]+\wt(x,v)\right)\right).$$

    We claim by induction on $i$ that after iteration $i$, $d_i[v]=\delta_G^{\le i}(u,v)$ for every $v\in V$. 
    For the inductive step, if $P_G^{\le i}(u,v)$ has at most $i-1$ edges, then $d_i[v]\le d_{i-1}[v]=\delta_G^{\le i-1}(u,v)=\delta_G^{\le i}(u,v)$. 
    Otherwise, let $(x,v)$ be the last edge of $P_G^{\le i}(u,v)$.
    From the definition, it follows that:
    $$\delta_G^{\le i}(u,v)=\wt(P_G^{\le i}(u,v))=\wt(P_G^{\le i-1}(u,x))+\wt(x,v)=\delta_G^{\le i-1}(u,x)+\wt(x,v).$$
    By the induction hypothesis, $d_{i-1}[x]=\delta_G^{\le i-1}(u,x)$, so in iteration $i$ we obtain $d_i[v]\le d_{i-1}[x]+\wt(x,v)=\delta_G^{\le i}(u,v)$.
    For the reverse inequality $d_i[v]\ge \delta_G^{\le i}(u,v)$, observe that, by the update rule, $d_i[v]$ equals either $d_{i-1}[v]$ or $d_{i-1}[y]+\wt(y,v)$ for some edge $(y,v)\in E$.
    In the first case, applying the induction hypothesis to $d_{i-1}[v]$ gives $d_i[v]=d_{i-1}[v]=\delta_G^{\le i-1}(u,v)\ge \delta_G^{\le i}(u,v)$, where the last inequality holds because every path with at most $i-1$ edges also has at most $i$ edges.
    In the second case, applying the induction hypothesis to $d_{i-1}[y]$ gives $d_i[v]=\delta_G^{\le i-1}(u,y)+\wt(y,v)$, which is the length of a walk from $u$ to $v$ with at most $i$ edges; since the weights are non-negative, this walk contains a path from $u$ to $v$ with at most $i$ edges and no greater length, so $d_i[v]\ge \delta_G^{\le i}(u,v)$.

    The algorithm is composed of $k$ iterations, each scanning all $m$ edges, so the total running time is $O(mk)$.
    With in-place updates, as in \Cref{alg:bounded-bf,alg:active-bf}, the labels $d^{(i)}$ after $i$ full edge scans satisfy $\delta_G(u,v)\le d^{(i)}[v]\le d_i[v]=\delta_G^{\le i}(u,v)$, since every finite label is a walk weight and using updated labels can only decrease values earlier.
    Thus $d^{(i)}[v]=\delta_G(u,v)$ whenever $i\ge |P_G(u,v)|$, which is the only consequence we use.
\end{proof}

\subsection{Balls and bundles}\label{S-balls-bundles}
We define bundles and balls as follows. Let $S \subseteq V$ be a non-empty set.
For each $v \in V$, let $b(v)=\arg\min_{s \in S} \delta(v,s)$ be the closest vertex in $S$ to $v$, where the minimum is taken with respect to the total order $\prec$ of \Cref{S-positive-weights}; by item~\ref{P-sp-distinct} of \Cref{P-shortest-paths} it is attained by a unique $s\in S$ (so $b(s)=s$ for all $s \in S$).
We say that $v$ is \emph{bundled} to $b(v)$.

For each $s\in S$, the \emph{bundle} of $s$ is $\Bundle(s)=\{v\in V : b(v)=s\}$; notice that $\{\Bundle(s)\}_{s\in S}$ partitions $V$.
Finally, the \emph{ball} of $v$ consists of $b(v)$ together with all vertices closer to $v$ than $b(v)$, that is, $\Ball(v)=\{w\in V : \delta(v,w)\prec\delta(v,b(v))\} \cup \{b(v)\}$, where, as everywhere in this paper, distances from the common vertex $v$ are compared under $\prec$; note that $\Ball(s)=\{s\}$ for every $s\in S$. See \Cref{fig:bundle-ball} for an illustration of the definitions.

\colorlet{cbund}{blue!65!black}    
\colorlet{cballA}{violet}          
\colorlet{cballB}{orange!90!black} 

\begin{figure}[t]
\centering
\begin{tikzpicture}[
    vert/.style    = {circle, draw=black, fill=white, minimum size=9pt, inner sep=0pt, line width=0.8pt, font=\scriptsize},
    svert/.style   = {circle, draw=black, fill=gray!25, minimum size=14pt, inner sep=0pt, line width=1.2pt, font=\scriptsize},
    centerA/.style = {circle, draw=cballA, fill=white, minimum size=15pt, inner sep=0pt, line width=1.3pt, font=\scriptsize},
    centerB/.style = {circle, draw=cballB, fill=white, minimum size=15pt, inner sep=0pt, line width=1.3pt, font=\scriptsize},
    gedge/.style   = {gray!55, line width=0.9pt},
]

\begin{scope}[rounded corners=11pt, line width=1.4pt, line join=round, cbund]
    \draw (-1.59,0.52) -- (1.01,0.16) -- (1.48,3.59) -- (-1.35,3.51) -- cycle;
    \draw ( 4.96,0.12) -- (7.62,0.49) -- (7.38,3.55) -- ( 4.49,3.62) -- cycle;
    \draw ( 3.03,1.45) -- (4.85,-0.91) -- (3.05,-2.05) -- (0.77,-1.34) -- cycle;
\end{scope}

\draw[gedge] (0,2)--(-1.1,3.2);   \draw[gedge] (0,2)--(-1.3,0.8);
\draw[gedge] (0,2)--(1.2,3.3);    \draw[gedge] (0,2)--(0.8,0.5);
\draw[gedge] (-1.1,3.2)--(1.2,3.3);  \draw[gedge] (-1.3,0.8)--(0.8,0.5);
\draw[gedge] (6,2)--(7.1,3.2);    \draw[gedge] (6,2)--(7.3,0.8);
\draw[gedge] (6,2)--(4.8,3.3);    \draw[gedge] (6,2)--(5.2,0.5);
\draw[gedge] (7.1,3.2)--(4.8,3.3);   \draw[gedge] (7.3,0.8)--(5.2,0.5);
\draw[gedge] (3,-1.6)--(3,-0.45); \draw[gedge] (3,-1.6)--(4.4,-0.85);
\draw[gedge] (3,-1.6)--(1.2,-1.2);   \draw[gedge] (3,-0.45)--(3,1.0);
\draw[gedge] (3,1.0)--(0.8,0.5);  \draw[gedge] (3,1.0)--(5.2,0.5);
\draw[gedge] (-1.3,0.8)--(1.2,-1.2); \draw[gedge] (7.3,0.8)--(4.4,-0.85);

\draw[cballA!80, dashed, line width=1pt] (3,1.0) circle (2.6);
\draw[cballA, line width=2pt] (3,1.0)--(3,-0.45)--(3,-1.6);     
\draw[cballB!85, dashed, line width=1pt] (5.2,0.5) circle (1.70);
\draw[cballB, line width=2pt] (5.2,0.5)--(6,2);                 

\node[svert] (s1) at (0,2)    {$s_1$};
\node[svert] (s2) at (6,2)    {$s_2$};
\node[svert] (s3) at (3,-1.6) {$s_3$};
\node[vert] at (-1.1,3.2){}; \node[vert] at (-1.3,0.8){};
\node[vert] at (1.2,3.3) {}; \node[vert] at (0.8,0.5) {};
\node[vert] at (7.1,3.2) {}; \node[vert] at (7.3,0.8) {};
\node[vert] at (4.8,3.3) {};
\node[vert] at (3,-0.45) {}; \node[vert] at (4.4,-0.85){};
\node[vert] at (1.2,-1.2){};
\node[centerA] at (3,1.0)   {$v_1$};
\node[centerB] at (5.2,0.5) {$v_2$};

\node[font=\footnotesize, cballA!65!black]          at (2.1,4.05)  {$\Ball(v_1)$};
\node[font=\footnotesize, cballB!80!black]          at (6.95,-0.15){$\Ball(v_2)$};

\end{tikzpicture}
\caption{%
    Bundles (blue) partition $V$ by nearest sample $s_i\in S$. Balls (coloured)
    contain $b(v)$ and all vertices closer to $v$ than $b(v)$. They may cross
    several bundles and overlap one another.%
}
\label{fig:bundle-ball}
\end{figure}

Both \BoundedBellmanFord and \ActiveBellmanFord also maintain, alongside $d[\cdot]$, a label $b[\cdot]$ that is overwritten whenever $d[\cdot]$ is; the following lemma is what we use about it, and it too is stated for in-place execution.

\begin{lemma}[{Correctness of the labels $b[\cdot]$}]\label{L-labels}
Consider an execution of \Cref{alg:bounded-bf} or of \Cref{alg:active-bf} on $G$ and $S\neq\emptyset$, and let $G'=(V\cup\{r\},E\cup\{(r,s,0)\mid s\in S\})$ be the graph with the auxiliary super-source $r$. At every point of the execution and for every $v\in V$, if $d[v]=\delta_{G'}(r,v)$ then $b[v]=b(v)$.
\end{lemma}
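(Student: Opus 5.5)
The plan is to prove a stronger invariant by induction over the sequence of label updates performed by \Cref{alg:bounded-bf} or \Cref{alg:active-bf}, and then specialise it to the case $d[v]=\delta_{G'}(r,v)$. I assume, as the surrounding text indicates, that both algorithms initialise $d[s]=0$ and $b[s]=s$ for every $s\in S$, and $d[v]=\infty$ for $v\notin S$. I also assume that a relaxation of an edge $(x,v)$ that improves $d[v]$ under $\prec$ sets $d[v]\leftarrow d[x]+\wt(x,v)$ and $b[v]\leftarrow b[x]$, together with the predecessor used for tie-breaking.

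\emph{Invariant.} At every point of the execution, for every $v$ with $d[v]<\infty$, there is a walk $W_v$ in $G'$ from $r$ to $v$ with the following properties: its key (weight, hop count, identifiers as tracked by the algorithm) is exactly the value represented by $d[v]$, and its second vertex, i.e.\ the first vertex after $r$, is $b[v]$.

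At initialisation, $W_s=(r,s)$ works for every $s\in S$, since its weight is $0=d[s]$ and its second vertex is $s=b[s]$. When a relaxation of $(x,v)$ overwrites $d[v]$ and $b[v]$, set $W_v:=W_x\circ(x,v)$. Its weight is $d[x]+\wt(x,v)$, and its second vertex is still $b[x]$, which is the new value of $b[v]$. No other labels change, so the invariant is preserved. This holds verbatim for in-place execution, since each update only reads the current $d[x],b[x]$.

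\emph{Specialisation.} Now suppose $d[v]=\delta_{G'}(r,v)$, meaning the key of $W_v$ equals the key of $P_{G'}(r,v)$.
\begin{itemize}
\item If $W_v$ repeated a vertex, removing the cycle would give a strictly $\prec$-smaller walk (as in the proof of \Cref{P-shortest-paths}), contradicting minimality. So $W_v$ is a path.
\item Distinct paths have distinct keys (\Cref{P-total-order}), so $W_v=P_{G'}(r,v)$.
\item Hence $b[v]$ is the second vertex $s^\ast$ of $P_{G'}(r,v)$. The suffix of this path from $s^\ast$ is a path in $G$, and every path from $r$ to $v$ in $G'$ has the form $(r,s)\circ Q$ with $s\in S$ and $Q$ an $s$--$v$ path in $G$.
\item Therefore $s^\ast$ minimises, over $s\in S$, the $\prec$-minimum path from $s$ to $v$ in $G$.
\item Since $G$ is undirected, reversing these paths gives $\delta(v,s)$, so $s^\ast=\arg\min_{s\in S}\delta(v,s)=b(v)$.
\end{itemize}

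\emph{Main obstacle.} The delicate step is the last one: the minimisation producing $s^\ast$ compares paths $r\to s\to\cdots\to v$, whereas $b(v)$ is defined via paths starting at $v$. Weights and hop counts agree after reversal, since the extra $0$-edge from $r$ adds the same amount to each candidate. The identifier tie-break, however, reads the vertex sequence in the opposite order. My first attempt would be to argue that the tie-breaking in the algorithm, which compares predecessor identifiers at the common endpoint $v$, induces exactly the order used to define $b(v)$ on the reversed paths. Failing that, I would note that the definition of $b(v)$ is to be read with respect to this same (reversed) key, which the paper's convention of resolving all comparisons by $\prec$ permits. Everything else is a routine induction.
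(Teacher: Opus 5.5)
Your invariant (the walk represented by $d[v]$ always starts with the edge $(r,b[v])$) is correct and holds verbatim for in-place execution. It is a cleaner route than the paper's, which inducts on $|P_{G'}(r,v)|$, looks at the last relaxation of $d[v]$ along some $(x,v)$, uses prefix-optimality to get $d[x]=\delta_{G'}(r,x)$ at that moment, and applies the hypothesis to $x$. You need neither the prefix property nor the remark that an optimal label is never rewritten.

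Both arguments, however, end at the same claim: the vertex after $r$ on $P_{G'}(r,v)$ is $b(v)$. The paper simply asserts it (``their first edge, which is $(r,b(v))$''), and you leave it open. Your plan (a) cannot close it, because with the literal definitions the claim is false. The key $\kappa$ reads identifiers from the endpoint backwards. Among paths $r\to s\to\cdots\to v$ of equal weight and hop count it therefore next compares the predecessor of $v$, whereas among paths $v\to\cdots\to s$ it compares $\id(s)$ first. Take the unit-weight path $s_1,a,v,c,s_2$ with $S=\{s_1,s_2\}$, $\id(s_1)<\id(s_2)$ and $\id(c)<\id(a)$. Both samples are at weight $2$ and $2$ hops from $v$, so $b(v)=s_1$. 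Yet $(r,s_2,c,v)\prec(r,s_1,a,v)$, so once either algorithm reaches $d[v]=\delta_{G'}(r,v)$, your own invariant gives $b[v]=s_2\neq b(v)$. So the statement fails whenever nearest samples tie in weight and hop count, and the paper's proof has the same hole.

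The only way to finish is your option (b), and that is a change of definition rather than a proof step. One fix is to take $b(v)$ to be the sample minimising $\kappa(P(s,v))$ over $s\in S$, i.e.\ the vertex after $r$ on $P_{G'}(r,v)$. Another is to keep $b(v)$ and have relaxations compare $\id(b[\cdot])$ immediately after weight and hop count. That order on paths from $r$ still has the three properties of \Cref{P-total-order}, and its minimiser to $v$ starts with $(r,b(v))$. Under either fix your invariant finishes the proof in one line.

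What does hold with the literal definitions is only that the final walk to $v$ has the same weight and hop count as $P_G(v,b(v))$. That suffices for the running-time bound of \Cref{T-active-BF}, but not for the claim that grouping vertices by $b[\cdot]$ yields the bundles.
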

\begin{proof}
Recall that a label $d[v]$ always represents a concrete walk from $r$ to $v$ in $G'$, and that labels are compared under the total order $\prec$ of \Cref{S-positive-weights}; thus $d[v]=\delta_{G'}(r,v)$ means that the walk represented by $d[v]$ is exactly $P_{G'}(r,v)$.
We argue by induction on $|P_{G'}(r,v)|$.
If $d[v]$ still holds its initial value, then $v\in S$ and $b[v]=v$; moreover $P_{G'}(r,v)$ is then the single edge $(r,v)$, so $b(v)=v$, as required.
Otherwise, consider the last relaxation that wrote $d[v]$, say along an edge $(x,v)$, which also set $b[v]\leftarrow b[x]$; the walk represented by $d[v]$ from that moment on is the walk represented by $d[x]$ at that moment, extended by $(x,v)$.
Since $d[v]=\delta_{G'}(r,v)$, that walk is $P_{G'}(r,v)$; hence $x$ is the predecessor of $v$ on $P_{G'}(r,v)$ and $d[x]$ represented the prefix $P_{G'}(r,x)$, i.e., $d[x]=\delta_{G'}(r,x)$ at that moment (item~\ref{P-sp-tree} of \Cref{P-shortest-paths}).
As $|P_{G'}(r,x)|=|P_{G'}(r,v)|-1$, the induction hypothesis gives $b[x]=b(x)$ at that moment, so $b[v]=b(x)$.
Finally, $P_{G'}(r,x)$ is a prefix of $P_{G'}(r,v)$, so the two paths share their first edge, which is $(r,b(v))$; hence $b(x)=b(v)$ and $b[v]=b(v)$.
Since $\delta_{G'}(r,v)$ is the $\prec$-minimum, $d[v]$ is never written again, so the equality persists.
\end{proof}

\section{Faster algorithm for SSSP in weighted undirected graphs}\label{S-small-improve}
In this section we present the main result of this paper, an algorithm that is faster than the state-of-the-art algorithm of~\cite{duan2023randomized} for single-source shortest paths in sparse weighted undirected graphs.
\theoremImprovedUndirected*

We first review the algorithm of \cite{duan2023randomized} and then describe our improvement.

\subsection{The framework of \cite{duan2023randomized}}
Let $S$ be a random set formed by sampling each vertex independently with probability $1/k$, and note that $\EE[|S|]=n/k$ and $\EE[|\Ball(v)|]\le k$ for every $v\in V$.

For simplicity of presentation, assume first that we are in a \emph{balanced case}, in which $|S|=O(n/k)$ and $|\Ball(v)|=O(k)$ for every $v\in V$.\footnote{The balanced case need not hold when $S$ is randomly sampled; it is only used for clarity of presentation. In \Cref{S-Active-BF} we remove this assumption.}

The $O(m\log^{1/2}{n}\log\log^{1/2}{n})$ time algorithm of \cite{duan2023randomized} works as follows. 
First, it computes the distances $\delta(v,b(v))$ for every $v\in V$, the bundles $\Bundle(s)$ for every $s\in S$, and the balls $\Ball(v)$ for every $v\in V$.
Then, it invokes \BundleDijkstra (\Cref{L-BundleDijkstra}) from the source $u$ to complete the SSSP computation.
\begin{lemma}[\BundleDijkstra~\cite{duan2023randomized}]\label{L-BundleDijkstra}
    Given a weighted undirected graph $G$, a source $u\in V$, a set $S\subseteq V$ with $|S|=O(n/k)$, and $\delta(v,w)$ for every $v\in V$ and $w\in\Ball(v)$, algorithm \BundleDijkstra computes the single-source shortest paths from $u$ to every $v\in V$ in
    $O\!\left(\sum_{v\in V}|\Ball(v)| + \frac{m}{k}\log n\right)$ time.
\end{lemma}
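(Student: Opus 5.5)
This lemma is \BundleDijkstra of~\cite{duan2023randomized}, so the plan is to recall that algorithm and verify its two properties: correctness, and that the heap holds only sampled vertices. The main loop is a Dijkstra-like loop whose heap contains only vertices of $S$, keyed by tentative distances $d[\cdot]$. Every vertex $v\notin S$ is handled implicitly through its bundle. When a sample $s$ is extracted from the heap, its distance $d[s]=\delta(u,s)$ is final, and we then settle every $v\in\Bundle(s)$. For each such $v$ we compute $d[v]$ as the minimum of two candidate families: (i) $d[s]+\delta(s,v)$, and $d[y]+\delta(y,v)$ for $y\in\Ball(v)$ already settled; (ii) for neighbors $z$ of $v$ that are settled, the quantity $d[z]+\wt(z,v)$, together with $d[z']+\delta(z',z)+\wt(z,v)$ for $z'\in\Ball(z)$. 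Then, for each neighbor $x$ of $v$, we relax through $x$ into the samples: we update $d[b(x)]$ with $d[v]+\wt(v,x)+\delta(x,b(x))$, and also settle-propagate to vertices $y$ having $x\in\Ball(y)$ using $d[v]+\wt(v,x)+\delta(x,y)$. In every case we use only distances $\delta(v,w)$ with $w\in\Ball(v)$, which are given.

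For correctness, I will argue by induction along the order in which samples are popped. The key structural fact, which holds because $G$ is undirected, is the following. Let $P=P(u,v)$ and let $x$ be the last vertex on $P$ whose distance is settled before $v$. Then every vertex between $x$ and $v$ on $P$ lies in $\Ball(v)$, or else the segment reaches $v$ through some $y$ with $v\in\Ball(y)$. The reason is that if a vertex $w$ on the segment satisfies $\delta(w,v)\succ\delta(v,b(v))$, then $b(v)$ is closer to $v$ than $w$ is. Going via $b(v)$, which shares the bundle timing, would then settle the path earlier, giving a contradiction. With this in hand, I will show that when $b(v)$ is popped, $d[v]$ is correct. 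Either the prefix of $P$ up to the ball boundary was handled by an earlier settled vertex and relaxed through an edge $(z,v)$ or through a ball entry, or $\delta(u,b(v))+\delta(b(v),v)$ is tight. I expect this case analysis to be the main obstacle: one must choose the relaxation rules precisely so that each shortest path is caught either by a ball hop at $v$ or by a ball hop at a neighbor of a settled vertex. I would take the exact rules from~\cite{duan2023randomized} and verify this invariant. I must also check that sample keys are monotone nondecreasing, so that the popped order is valid. This holds because each updated key is a genuine walk length that is at least the current extracted key, using Property~\ref{P-total-order} and nonnegative weights.

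For the running time, the heap holds $|S|=O(n/k)$ samples. Using a Fibonacci heap, the $O(n/k)$ extract-min operations cost $O(\frac{n}{k}\log n)=O(\frac mk\log n)$, and decrease-keys are $O(1)$ each. Each vertex $v$ is settled exactly once, since the bundles partition $V$. At that moment, rule (i) costs $O(|\Ball(v)|)$. Rule (ii) and the outgoing relaxations cost $O(\deg(v))$ plus ball scans. Charging each ball scan through $y$ with $x\in\Ball(y)$ to the pair $(y,x)$, the total over all pairs is $\sum_v|\Ball(v)|$. With degree $O(m/n)=O(1)$ in the sparse regime, all non-heap work is $O(m+\sum_{v}|\Ball(v)|)$. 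Together these give $O(\sum_v|\Ball(v)|+\frac mk\log n)$.
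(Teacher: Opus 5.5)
The paper does not prove this lemma; it imports it from \cite{duan2023randomized}, so a bare citation is what the paper does. Your running-time accounting (a Fibonacci heap on $O(n/k)$ samples, ball scans charged to pairs, $O(1)$ degree) is fine. The gap is in the correctness part. The rules you commit to do not support your invariant, and the argument you give for it is not valid.

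\emph{A counterexample to your invariant.} Take the path $s,y,c,v$ with weights $10,2,2$, an edge $(v,u)$ of weight $3$, an edge $(s,u)$ of weight $12.2$, a pendant edge $(c,u_2)$ of weight $1.5$, source $s$, and $S=\{s,u,u_2\}$. Then $b(y)=b(c)=u_2$ and $b(v)=u$. The balls are $\Ball(y)=\{y,c,u_2\}$, $\Ball(c)=\{c,u_2\}$ and $\Ball(v)=\{v,c,u\}$, and $\delta(s,v)=14$. After popping $s$, your propagation only reaches vertices whose ball contains $y$ or $u$. This gives $d[y]=10$, $d[u]=12.2$, $d[u_2]=13.5$ and $d[v]=15.2$, while $c$ is never touched because $y\notin\Ball(c)$. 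Next $u$ is popped, and rules (i) and (ii) yield at best $15.2$ (or $17$ via $u_2\in\Ball(c)$). So $d[v]$ is wrong when $b(v)$ is popped, and $v$ then propagates a wrong value.

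\emph{Why your arguments do not rescue it.} Your ``structural fact'' cannot repair this. Its second alternative is vacuous, since $v\in\Ball(v)$. Moreover, being farther from $v$ than $b(v)$ says nothing about whether the route through $b(v)$ is short from the source. Likewise, ``keys are genuine walk lengths'' only gives $\mathrm{key}\ge\delta$; it does not show that a popped key is exact.

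\emph{The missing idea.} Compare the pop time of $u=b(v)$ with the first unsettled vertex on the path. Let $x$ be the last settled vertex on $P(s,v)$, $y$ its successor, and $Q$ the segment from $y$ to $v$. Propagation from $x$ relaxed $y$ and the key of $b(y)$, and $b(y)$ is not popped before $u$ (or equals $u$). Hence $\delta(s,u)\le\delta(s,y)+\delta(y,b(y))$. The same inequality, combined with $\delta(y,b(y))\le\delta(y,u)$ and applied to $P(s,u)$, is what proves that a popped key is exact.

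Along $Q$ the distance from $y$ increases and the distance to $v$ decreases. So $\Ball(y)\cap Q$ is a prefix of $Q$ and $\Ball(v)\cap Q$ is a suffix. There are three cases:
\begin{itemize}
\item If some vertex of $Q$ lies in neither ball, then $\delta(y,v)\ge\delta(y,b(y))+\delta(u,v)$, and the route through $u$ is tight.
\item If the two sets share a vertex $w$, use $d[w]+\delta(w,v)$.
\item If they meet along an edge $(z,w)$, use $d[z]+\ell(z,w)+\delta(w,v)$.
\end{itemize}

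\emph{The rules this forces.} Propagation from a settled vertex through a neighbor $y$ must relax all of $\Ball(y)$, the forward ball; in the example this sets $d[c]=12$. Settling $v$ must use every $w\in\Ball(v)$, settled or not, and every edge entering $\Ball(v)$. Your inverse-ball propagation does not cover these cases. Neither does your rule (ii), which puts the ball hop at a neighbor of $v$ instead of at $v$, nor your restriction of (i) to settled vertices. With the correct rules your cost analysis is unchanged. Finally, if the source is not in $S$, you must say how the process starts, for example by treating the source as settled and running its propagation step first.
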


Notice that in the balanced case, given the bundles and balls, \BundleDijkstra completes the SSSP computation in $O(mk+(m/k)\log{n})$ time, since $|\Ball(v)|=O(k)$ for every $v\in V$.
Perhaps surprisingly, the dominant cost of the algorithm of \cite{duan2023randomized} is not \BundleDijkstra but the construction of bundles and balls.
To compute bundles and balls, \cite{duan2023randomized} runs a separate bounded Dijkstra from each $v\in V$, stopping when the first vertex of $S$ is reached, thus computing $b(v)$ and $\Ball(v)$ in $O(k\log k)$ time per vertex and $O(mk\log k)$ time overall, which dominates the $O(mk+(m/k)\log n)$ cost of \BundleDijkstra. 
Therefore, in this paper we focus on efficient computation of bundles and balls.

\subsection{Speeding up the ball and bundle construction}\label{S-construction}
First, we present an algorithm that, under the balanced case assumption, computes the distances $\delta(v,b(v))$ and the bundles in $O(mk)$ time, improving the $O(mk\log{k})$ running time of \cite{duan2023randomized}.

The algorithm, \BoundedBellmanFord (\Cref{alg:bounded-bf}), runs a fixed number $t$ of Bellman--Ford steps from an auxiliary super-source connected to all samples, maintaining for each vertex a tentative distance $d[v]$ together with a label $b[v]$ recording the sample that realizes it; the label is carried along whenever $d[v]$ is relaxed.
Taking $t=\ceil{ck+1}$, we show that under the balanced case assumption, this suffices to compute $\delta(v,b(v))$ and the bundles.

\begin{algorithm2e}[t]
\caption{\BoundedBellmanFord$(G,S,t)$}\label{alg:bounded-bf}
\DontPrintSemicolon
\lForEach{$v\in V$}{$d[v]\leftarrow\infty$}
\ForEach{$s\in S$}{$d[s]\leftarrow 0$\; $b[s]\leftarrow s$}
\For{$i\leftarrow 1$ \KwTo $t$}{
    \ForEach{$(v,w)\in E$}{
        \If{$d[v]+\wt(v,w)<d[w]$}{
            $d[w]\leftarrow d[v]+\wt(v,w)$\;
            $b[w]\leftarrow b[v]$\;
        }
    }
}
\Return $\{d[v]\}_{v\in V}$, $\{b[v]\}_{v\in V}$\;
\end{algorithm2e}

\colorlet{csample}{blue!65!black}        
\colorlet{csource}{red!65!black}         
\colorlet{cpath}{violet}                 
\colorlet{cpathB}{orange!85!black}       

\begin{figure}[t]
\centering
\begin{tikzpicture}[
    vert/.style   = {circle, draw=black,   fill=white,     minimum size=9pt,  inner sep=0pt, line width=0.8pt, font=\scriptsize},
    svert/.style  = {circle, draw=csample, fill=blue!12,   minimum size=14pt, inner sep=0pt, line width=1.2pt, font=\scriptsize},
    rvert/.style  = {circle, draw=csource, fill=red!10,    minimum size=15pt, inner sep=0pt, line width=1.5pt, font=\footnotesize},
    pvert/.style  = {circle, draw=cpath,   fill=violet!8,  minimum size=9pt,  inner sep=0pt, line width=1.0pt, font=\scriptsize},
    pvertB/.style = {circle, draw=cpathB,  fill=orange!8,  minimum size=9pt,  inner sep=0pt, line width=1.0pt, font=\scriptsize},
    gedge/.style  = {gray!55, line width=0.9pt},
    zedge/.style  = {csource, dashed, line width=1.1pt},
    pedge/.style  = {cpath,  line width=2pt, line cap=round},
    pedgeB/.style = {cpathB, line width=2pt, line cap=round},
]

\draw[rounded corners=10pt, gray!45, line width=1pt, dashed]
    (-1.5,-2.2) rectangle (9.0,3.2);
\node[font=\small, gray!50!black] at (-1.0, 2.85) {$G$};

\draw[gedge] (0,1.7)    -- (-0.8,0.3);
\draw[gedge] (0,1.7)    -- (0.8,-1.7);
\draw[gedge] (-0.8,0.3) -- (0.8,-1.7);
\draw[gedge] (1.5,-0.6) -- (0.8,-1.7);
\draw[gedge] (0,1.7)    -- (2.6,0.9);
\draw[gedge] (2.6,0.9)  -- (1.5,-0.6);    
\draw[gedge] (3.5,2.2)  -- (2.6,0.9);     
\draw[gedge] (3.5,2.2)  -- (5.2,2.8);
\draw[gedge] (3.5,2.2)  -- (4.8,0.3);
\draw[gedge] (5.2,2.8)  -- (7.2,1.1);
\draw[gedge] (4.8,0.3)  -- (7.2,1.1);
\draw[gedge] (7.2,1.1)  -- (8.2,0.2);
\draw[gedge] (4.8,0.3)  -- (5.8,-1.0);
\draw[gedge] (5.8,-1.0) -- (8.2,0.2);

\draw[pedge]  (3.5,2.2) -- (2.6,0.9) -- (1.5,-0.6);   
\draw[pedgeB] (0,1.7)   -- (-0.8,0.3) -- (0.8,-1.7);  

\node[svert]  (s1) at (0,1.7)    {$s_1$};
\node[svert]  (s2) at (3.5,2.2)  {$s_2$};
\node[svert]  (s3) at (7.2,1.1)  {$s_3$};
\node[pvertB] at (-0.8,0.3)      {};
\node[pvertB] (v1) at (0.8,-1.7) {$v_1$};
\node[pvert]  at (2.6,0.9)       {};
\node[pvert]  (v2) at (1.5,-0.6) {$v_2$};
\node[vert]   at (5.2,2.8)       {};
\node[vert]   at (4.8,0.3)       {};
\node[vert]   at (5.8,-1.0)      {};
\node[vert]   at (8.2,0.2)       {};

\node[font=\scriptsize, cpathB,  anchor=east] at (-0.2,1.88)  {$b(v_1)\!=\!s_1$};
\node[font=\scriptsize, csample, anchor=west] at (3.7,1.88)   {$b(v_2)\!=\!s_2$};
\node[font=\scriptsize, cpathB,  anchor=east] at (-0.3,-0.5)  {$\delta_G(v_1,b(v_1))$};
\node[font=\scriptsize, cpath,   anchor=east] at (2.35,0.70)  {$\delta_G(v_2,b(v_2))$};

\node[rvert] (r) at (3.5,5.6) {$r$};

\draw[zedge] (r) -- (s1)
    node[pos=0.53, left,  font=\scriptsize, csource] {$0$};
\draw[zedge] (r) -- (s2)
    node[pos=0.53, right, font=\scriptsize, csource, xshift=2pt] {$0$};
\draw[zedge] (r) -- (s3)
    node[pos=0.53, right, font=\scriptsize, csource] {$0$};

\end{tikzpicture}
\caption{%
    The super-source construction of \Cref{L-BF-good-case}.
    Weight-$0$ edges (red) connect $r$ to every $s\in S$, giving
    $\delta_{G'}(r,v)=\delta_G(v,b(v))$ for every $v$
    (violet: $b(v_2)=s_2$; orange: $b(v_1)=s_1$).
}
\label{fig:super-source}
\end{figure}

\begin{lemma}[Computing bundles efficiently in the balanced case]\label{L-BF-good-case}
    Given a set $S$ such that $|\Ball(v)|\le ck$ for every $v\in V$, for a constant $c>0$, $\BoundedBellmanFord(G,S,\ceil{ck+1})$ (\Cref{alg:bounded-bf}) computes $d[v]=\delta(v,b(v))$ for every $v\in V$ and, after grouping vertices by $b[\cdot]$, the bundles $\Bundle(s)$ for every $s\in S$, in $O(mk)$ time.
\end{lemma}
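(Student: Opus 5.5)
The plan is to view \BoundedBellmanFord as ordinary in-place Bellman--Ford on the augmented graph $G'=(V\cup\{r\},E\cup\{(r,s,0)\mid s\in S\})$ of \Cref{L-labels}. Initializing $d[s]=0$ and $b[s]=s$ for all $s\in S$ is exactly the state after the first round from $r$, since the edges $(r,s)$ are relaxed once and never again. The argument then has three steps. First, identify $\delta_{G'}(r,v)$ with $\delta(v,b(v))$. Second, bound the hop length of $P_{G'}(r,v)$ by the ball size. Third, invoke the in-place consequence of \Cref{L-bellman1958routing} and \Cref{L-labels}.

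\textbf{Step 1.} Any path from $r$ to $v$ in $G'$ has the form $(r,s)\circ Q$, where $Q$ is a path from $s$ to $v$ in $G$. Undirectedness lets us reverse $Q$, and the key ordering on $(r,s)\circ Q$ agrees with the ordering used to define $b(v)$ via $\prec$. Hence $\delta_{G'}(r,v)=\delta(v,b(v))$, and the first edge of $P_{G'}(r,v)$ is $(r,b(v))$. I would treat the tie-breaking details only briefly, appealing to \Cref{P-shortest-paths} and to the convention that distances are compared under $\prec$.

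\textbf{Step 2, the main point.} Let $Q=P_{G'}(r,v)$ minus its first edge, a path in $G$ from $b(v)$ to $v$. I claim every vertex $w$ on $Q$ lies in $\Ball(v)$. The suffix of $Q$ from $w$ to $v$, reversed, is a path from $v$ to $w$. It is a proper subpath of the reversal of $Q$, which is $P(v,b(v))$ by undirectedness and uniqueness. By item~\ref{P-sp-prefix} of \Cref{P-shortest-paths}, for $w\neq b(v)$ we get $\delta(v,w)\prec\delta(v,b(v))$, so $w\in\Ball(v)$; and $b(v)\in\Ball(v)$ by definition. Since $Q$ is simple, it has at most $|\Ball(v)|\le ck$ vertices, hence at most $ck-1$ edges. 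Therefore $|P_{G'}(r,v)|\le ck$.

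The delicate part is the orientation and reversal under $\prec$: the key is defined with identifiers in reverse order, so reversal is not literally key-preserving. I would handle this by defining $b(v)$ and the ball through $P(v,\cdot)$. I would also check that the reversal of the $\prec$-minimum path realizing $b(v)$ is exactly $Q$, or at least that $Q$ is some minimum-weight path whose vertices all satisfy the ball condition, which suffices.

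\textbf{Step 3.} By the in-place remark after \Cref{L-bellman1958routing}, $d[v]=\delta_{G'}(r,v)$ after $i\ge|P_{G'}(r,v)|$ edge scans. Counting the implicit first round from $r$, $\lceil ck+1\rceil$ rounds suffice. At that point \Cref{L-labels} gives $b[v]=b(v)$, so grouping by $b[\cdot]$ in $O(n)$ time yields the bundles. Each round scans $O(m)$ edges, and both orientations of each undirected edge are scanned, so the total time is $O(m(ck+1))=O(mk)$.
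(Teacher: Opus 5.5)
Your proposal is correct and follows essentially the same route as the paper. Both arguments reinterpret \BoundedBellmanFord as in-place Bellman--Ford from the super-source $r$ in $G'$, show that the shortest path to the nearest sample lies inside $\Ball(v)$ so that $|P_{G'}(r,v)|\le ck+1$, and conclude with the in-place guarantee of \Cref{L-bellman1958routing} and \Cref{L-labels}. The paper avoids the reversal issue you flag: it applies item~\ref{P-sp-prefix} of \Cref{P-shortest-paths} directly to $P_G(v,b(v))$ and transfers only the hop count $|P_{G'}(r,v)|=1+|P_G(v,b(v))|$, which is valid because the first two key coordinates, weight and number of edges, are reversal-invariant in an undirected graph, so you never need $Q$ to be literally the reversal of $P(v,b(v))$.
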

\begin{proof}
The execution of \Cref{alg:bounded-bf} coincides with running $t=\ceil{ck+1}$ steps of Bellman--Ford from an auxiliary super-source $r$ in the graph $G'=(V\cup\{r\},E'=E\cup \{(r,s,0)\mid s\in S\})$, where $r$ is connected to every $s\in S$ by a weight-$0$ edge: the initialization $d[s]=0$ for $s\in S$ corresponds to relaxing the edges $(r,s)$, and each subsequent iteration is one Bellman--Ford step over $E$.
Since $b(v)$ is the closest sample to $v$, we have $\delta_{G'}(r,v)=\delta_G(v,b(v))$ for every $v\in V$ (see \Cref{fig:super-source} for an illustration).

The key insight is that $P_G(v,b(v))\subseteq\Ball(v)$:
every vertex on the path to $b(v)$ lies strictly closer to $v$ than $b(v)$ itself, placing it inside the ball, so $|P_{G'}(r,v)|\leq |\Ball(v)|+1$, and therefore $\delta_{G'}^{\leq |\Ball(v)|+1}(r,v)=\delta_{G'}(r,v)$.

Formally, for every $x\in P_G(v,b(v))$ with $x\neq b(v)$, the subpath $P_G(v,x)$ is a proper prefix of $P_G(v,b(v))$ and is therefore strictly smaller under the total order of \Cref{S-positive-weights} (item~\ref{P-sp-prefix} of \Cref{P-shortest-paths}), so $\delta(v,x)\prec\delta(v,b(v))$, meaning $x\in\Ball(v)$. Note that this holds even when the two subpaths have the same weight, as is the case when the subpath from $x$ to $b(v)$ consists of weight-$0$ edges.

By definition of $\Ball(v)$, we have $b(v)\in\Ball(v)$ and therefore $P_G(v,b(v))\subseteq\Ball(v)$. This implies that $|P_G(v,b(v))|\le|\Ball(v)|\le ck$ and $|P_{G'}(r,v)|=1+|P_G(v,b(v))|\le ck+1=t$.
Therefore, we get: $$\delta_{G'}^{\le t}(r,v) = \delta_{G'}(r,v)=\delta(v,b(v)).$$

Thus, after the $t=\ceil{ck+1}$ steps (the in-place guarantee in the proof of \Cref{L-bellman1958routing}), $d[v]=\delta_{G'}(r,v)=\delta_G(v,b(v))$ for every $v\in V$.
By \Cref{L-labels}, at that point $b[v]=b(v)$ for every $v\in V$; grouping vertices by $b[v]$ thus yields $\Bundle(s)$ for every $s\in S$.
The $t=O(k)$ Bellman--Ford steps take $O(mk)$ time by \Cref{L-bellman1958routing}, which dominates the $O(n)$ grouping step.
\end{proof}

Having computed the distances $\delta(v,b(v))$ and the bundles, it remains to construct $\Ball(v)$ for every $v\in V$.
We construct the balls using a bounded-distance search $\BoundSSSP(v,B)$, which computes all distances smaller than $B$ from a source $v$, and which we obtain from the recursive procedure $\BMSSP$ of~\cite{duan2026faster}.

This requires some care, for two reasons.
First, a call to $\BMSSP$ is allowed to end in a \emph{partial execution}: it may return a bound $B'\prec B$ and complete only the vertices at distance smaller than $B'$, in which case it does not compute all of $U$.
Second, the running time of $\BMSSP$ depends on two parameters, $t$ and $\ell$, whose optimal setting depends on $|U|$ --- which is not known in advance, as $|U|=O(|\Ball(v)|)$ varies over the sources $v$.
In \Cref{A-BoundSSSP} we resolve both issues, by invoking $\BMSSP$ with a doubling parameter $t$ and an increasing level $\ell$ until a full execution is reached. By geometric summation, the total cost of the unsuccessful calls is bounded by the same asymptotic running-time bound as the successful phase. This yields the following lemma.
\begin{lemma}[{\BoundSSSP; derived from Lemma~3.1 of \cite{duan2026faster} in \Cref{A-BoundSSSP}}]\label{L-Faster-D-SSSP}
    Let $G$ be a weighted directed graph with maximum degree $\Delta\ge 1$, let $v\in V$, let $B$ be a bound, and let $U=\{w\mid \delta(v,w)\prec B\}$.
    After a single global initialization taking $O(n)$ time, there is an algorithm $\BoundSSSP(v,B)$ that computes $\delta(v,w)$ for every $w\in U$ in $O(\Delta\,|U|\log^{1/2}{|U|}\log\log^{1/2}{|U|})$ time.
    In particular, on a graph preprocessed by the degree reduction of \Cref{S-degree-reduction}, in which $\Delta=m/n$, the running time is $O(\frac{m}{n}|U|\log^{1/2}{|U|}\log\log^{1/2}{|U|})$.
\end{lemma}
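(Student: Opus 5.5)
The plan is to obtain $\BoundSSSP(v,B)$ as a wrapper around the recursive procedure $\BMSSP$ of~\cite{duan2026faster}. Their Lemma~3.1 has the following shape. A call at level $\ell$ with parameters $(k',t)$ receives a bound $B$ and a frontier set of size at most $2^{\ell t}$. It returns a bound $B'\preceq B$ together with the set $U'$ of vertices whose distance is $\prec B'$, and it computes those distances exactly. The returned bound satisfies either $B'=B$ (a full execution) or $|U'|=\Theta(k'2^{\ell t})$ (a partial execution). The running time is $O(\Delta\,|U'|(\log k'+\ell t+\log |U'|/t))$, where $\Delta$ is the degree, since each completed vertex relaxes $\Delta$ edges. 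All of this holds in the comparison-addition model, with ties broken by $\prec$ exactly as in \Cref{S-positive-weights}.

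We call it from the single frontier $\{v\}$ with $d[v]=0$. The global arrays $d[\cdot]$ and the pred/flag arrays are allocated once in $O(n)$ time. After each call we reset only the entries we touched, so per-call initialization costs $O(\Delta|U'|)$ rather than $O(n)$.

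The difficulty is that $|U|$ is unknown. We therefore run a doubling schedule in phases $j=1,2,\dots$. Phase $j$ targets size $N_j=2^{2^j}$ and sets $t_j=\lceil\sqrt{\log N_j\log\log N_j}\rceil$, $k'_j$ constant (or $\lceil\log^{1/2}N_j/\log\log^{1/2}N_j\rceil$), and $\ell_j=\lceil\log N_j/t_j\rceil$. These choices make the per-vertex cost $O(\log^{1/2}N_j\log\log^{1/2}N_j)$. We invoke $\BMSSP$ at level $\ell_j$ with bound $B$.

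If the call is a full execution, then $U'=U$ and we stop. Otherwise it is partial, and we learn $|U|>|U'|=\Omega(N_j)$. Its cost is $O(\Delta N_j\log^{1/2}N_j\log\log^{1/2}N_j)$, because a partial call at level $\ell_j$ completes at most $O(k'_j2^{\ell_jt_j})=O(N_j\,\mathrm{poly}\log N_j)$ vertices. To keep this clean we choose $\ell_j$ so that $k'_j2^{\ell_j t_j}\le N_j^{2}$; any constant power is fine, since only $\log N_j$ enters the per-vertex factor. We then undo the touched entries and continue.

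A full execution is guaranteed by the first phase with $N_j\ge|U|$ (up to the constant-power slack). At that point $\log N_j=O(\log|U|)$, since $N_j\le |U|^{2}$ by squaring. The cost of all phases is $\sum_{i\le j}\Delta N_i\log^{1/2}N_i\log\log^{1/2}N_i$. Because $N_i$ squares from one phase to the next, this sum is dominated by its last term, which is $O(\Delta |U|^{O(1)}\cdots)$. That is too weak, so the squaring schedule must be adjusted.

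The fix is to run the outer loop over $N_j=2^j$ (ordinary doubling), with the inner parameters recomputed from $N_j$. A partial call then costs $O(\Delta N_j\log^{1/2}N_j\log\log^{1/2}N_j)$ provided the returned $|U'|$ is $O(N_j)$. This requires the level/size relation $k'2^{\ell t}=\Theta(N_j)$. Achieving it needs non-integral control, which we obtain by adjusting $k'$ within a factor of $2^{t}$ while keeping $\log k'=O(t)$, so the per-vertex bound is unaffected.

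Geometric summation over $j$ then gives a total of $O(\Delta|U|\log^{1/2}|U|\log\log^{1/2}|U|)$. Here we use monotonicity of $x\mapsto\log^{1/2}x\log\log^{1/2}x$, guaranteed by the convention $\log x\ge1$.

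The main obstacle is exactly this parameter matching. We must ensure that a partial execution at the chosen $(k',t,\ell)$ never completes more than $O(N_j)$ vertices, and that the per-vertex overhead $\log k'+\ell t+\log|U'|/t$ is $O(\log^{1/2}N_j\log\log^{1/2}N_j)$ uniformly in $j$. This has to be read off carefully from the statement of Lemma~3.1 of~\cite{duan2026faster}, including its handling of the data structure $\mathcal D$ whose cost depends on $t$.

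A second, minor point concerns undirected graphs under the degree reduction. Since $\Delta=m/n$ there, the final specialization of the lemma is immediate.
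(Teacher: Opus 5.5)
Your proposal has a genuine gap, and it sits in the step you defer (``this has to be read off carefully from the statement of Lemma~3.1''). That step is the entire content of this lemma.

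The form of Lemma~3.1 you work from is not the available one: there is no free multiplicative parameter $k'$. A call \texttt{BMSSP}$(B,\{v\},\ell)$ with integer parameter $t\ge 2$ has workload $\Lambda_\ell=t^3 2^{\ell t}$. A partial execution returns a set with $c_0\Lambda_\ell\le|U'|\le|U|$, and the call costs $O(|U'|(\ell\log t+\Delta t))$. Even taken at face value, the bound you quote contains the term $\ell t$. Under your choice $\ell_j=\lceil\log N_j/t_j\rceil$ this is at least $\log N_j$, so it cannot give the per-vertex cost $O(\log^{1/2}N_j\log\log^{1/2}N_j)$ you claim. The true term is $\ell\log t$, and balancing $\frac{\log|U|}{t}\log t$ against $\Delta t$ is what produces $\sqrt{\log|U|\log\log|U|}$.

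More seriously, your final scheme (ordinary doubling $N_j=2^j$, one call per phase) needs the capacity to be $\Theta(N_j)$, which you get by ``adjusting $k'$''. With integer $\ell$, the capacity $t^3 2^{\ell t}$ can only be matched to $N_j$ up to a factor $2^t$. The pair $(t_j,\ell_j)$ then stays the same over long runs of consecutive $j$, and the capacity can even drop when $t_j$ increments. So a partial call returning $\Theta(|U|)$ vertices can be repeated a superconstant number of times, and your geometric summation breaks.

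The paper avoids exact matching with a two-level search.
\begin{itemize}
\item The outer loop doubles $t$, starting from a constant $t_0$ with $c_0t_0^3\ge 2$. For each $t$, the inner loop tries $\ell=0,1,\dots,L$ with $L=\min\{\lceil\Delta t/\log t\rceil,\lceil\log n/t\rceil\}$.
\item For fixed $t$ the workloads grow by the factor $2^t\ge 4$, so the partial calls of a phase sum to $O(|U|)$. The cap $\ell\le L$ keeps the per-vertex cost at $O(\Delta t)$, so every phase costs $O(|U|\Delta t)$.
\item Level $\lceil\log n/t\rceil$ forces a full execution, since $c_0\Lambda\ge 2n$. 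Hence a phase that fails must end with a partial call at level $L=\lceil\Delta t/\log t\rceil$, so $t^3 2^{Lt}\le|U|/c_0$.
\item This gives $\Delta t^2/\log t=O(\log|U|)$, so the final parameter is $t_J=O(1+\sqrt{\log|U|\log\log|U|/\Delta})$. Because $t$ doubles, the sum over phases is $O(|U|\Delta t_J)$.
\end{itemize}

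Incidentally, the squaring schedule you discarded was closer to working. You charged the last phase $N_j$, but a full call costs in proportion to $|U'|=|U|$. Partial calls return at most $|U|$ vertices, and their capacities grow doubly exponentially. Your claim that the overshoot is $\mathrm{poly}\log N_j$ is wrong: it is about $t_j^3 2^{t_j}$, though still $N_j^{o(1)}$. With the correct time bound, the total is $O(|U|)$ times the per-vertex cost at $\log N_j\le 2\log(|U|/c_0)$, which is what the lemma needs.
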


Recall from \Cref{L-BF-good-case} that $b(v)$ and the distance $\delta(v,b(v))$ are already known for every $v\in V$, so we can call $\BoundSSSP(v,B)$ for each $v$ with $B$ set to $\delta(v,b(v))$.
The set $U=\{w \mid \delta(v,w)\prec\delta(v,b(v))\}$ returned by $\BoundSSSP(v,B)$ is exactly $\Ball(v)\setminus\{b(v)\}$, so adding the already-known $b(v)$ yields $\Ball(v)$. In the balanced case we have $|\Ball(v)|= O(k)$, so the cost per vertex is $O(\frac{m}{n}|\Ball(v)|\log^{1/2}{|\Ball(v)|}\log\log^{1/2}{|\Ball(v)|})=O(\frac{m}{n}k\log^{1/2}{k}\log\log^{1/2}{k})$.

We are now ready to present the complete algorithm under the balanced case assumption. The algorithm $\ImprovedSSSPBalanced$ is composed of three steps.
\begin{enumerate}
    \item\label{step:bundles} Run $\BoundedBellmanFord(G,S,\ceil{ck+1})$ (\Cref{alg:bounded-bf}) to compute $\delta(v,b(v))$ for every $v\in V$ and the bundles $\Bundle(s)$ for every $s\in S$ (\Cref{L-BF-good-case}).
    \item\label{step:balls} Run $\BoundSSSP(v,B)$ for every $v\in V$ with bound $B=\delta(v,b(v))$ to obtain $\Ball(v)$.
    \item\label{step:bundle-dijkstra} Call \BundleDijkstra of \cite{duan2023randomized} with the bundles computed in step~\ref{step:bundles} and the balls computed in step~\ref{step:balls}, and return the resulting shortest-path distances.
\end{enumerate}
\begin{lemma}[\Cref{T-Improved-undirected} in the balanced case]\label{L-good-case}
    Given a set $S$ such that $|S|=O(n/k)$ and $|\Ball(v)|\le ck$ for every $v\in V$, for a constant $c>0$, algorithm \ImprovedSSSPBalanced computes SSSP in weighted undirected graphs in $O\!\left(mk\log^{1/2}{k}\log\log^{1/2}{k}+\frac{m}{k}\log{n}\right)$ time. 
    Setting $k$ to $\log^{1/2}{n}\cdot\log\log^{-1/4}{n}\cdot\log\log\log^{-1/4}{n}$, we get $O\!\left(m\log^{1/2}{n}\log\log^{1/4}{n}\log\log\log^{1/4}{n}\right)$ time.
\end{lemma}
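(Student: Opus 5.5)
The proof will follow the three steps of \ImprovedSSSPBalanced, bounding the correctness and running time of each by an earlier lemma, and then balancing the parameter $k$.

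\emph{Correctness.} Step~\ref{step:bundles} is correct by \Cref{L-BF-good-case}. Since $|\Ball(v)|\le ck$, it outputs $\delta(v,b(v))$ for every $v$ and the bundles, with the labels $b[v]=b(v)$ guaranteed by \Cref{L-labels}.

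For step~\ref{step:balls}, call $\BoundSSSP(v,\delta(v,b(v)))$. The bound $B$ is compared under $\prec$, meaning the path $P(v,b(v))$ is used as the threshold. Then $U=\{w:\delta(v,w)\prec\delta(v,b(v))\}=\Ball(v)\setminus\{b(v)\}$, and we add $b(v)$ together with its known distance. This gives $\delta(v,w)$ for every $w\in\Ball(v)$.

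These are exactly the inputs \BundleDijkstra requires in \Cref{L-BundleDijkstra}, together with $|S|=O(n/k)$. Hence step~\ref{step:bundle-dijkstra} returns correct SSSP distances.

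\emph{Running time.} Step~\ref{step:bundles} costs $O(mk)$.

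For step~\ref{step:balls}, apply \Cref{L-Faster-D-SSSP} after its one-time $O(n)$ initialization. For each $v$ we have $|U|<|\Ball(v)|\le ck$. The function $x\mapsto\log^{1/2}x\log\log^{1/2}x$ is non-decreasing and at least $1$, and $\log(ck)=O(\log k)$ for constant $c$. So the cost for $v$ is $O(\frac{m}{n}k\log^{1/2}k\log\log^{1/2}k)$. Summing over the $O(n)$ vertices of the degree-reduced graph gives $O(mk\log^{1/2}k\log\log^{1/2}k)$. The degree reduction itself is $O(m)$, and the graph keeps $O(n)$ vertices and $O(m)$ edges.

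Step~\ref{step:bundle-dijkstra} costs $O(\sum_v|\Ball(v)|+\frac{m}{k}\log n)=O(nk+\frac{m}{k}\log n)$. Together, the three steps give the first claimed bound.

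\emph{Choice of $k$.} Set $k=\log^{1/2}n\,\log\log^{-1/4}n\,\log\log\log^{-1/4}n$. Then $\log k=\Theta(\log\log n)$ and $\log\log k=\Theta(\log\log\log n)$. The first term becomes
\[
mk\log^{1/2}k\log\log^{1/2}k=\Theta\!\left(m\log^{1/2}n\,\log\log^{1/4}n\,\log\log\log^{1/4}n\right).
\]
The second term becomes
\[
\frac{m}{k}\log n=m\log^{1/2}n\,\log\log^{1/4}n\,\log\log\log^{1/4}n.
\]
The two terms match, which yields the claimed bound.

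The only step needing care is this asymptotic bookkeeping: checking that $\log k=\Theta(\log\log n)$ holds with the paper's convention $\log x=\max\{\log_2 x,1\}$, and that the constants $c$ do not affect the $\log\log$ factors. Everything else is a direct invocation of the earlier lemmas.
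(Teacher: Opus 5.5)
Your proposal is correct and follows the paper's proof essentially step for step. You derive correctness from \Cref{L-BF-good-case}, \Cref{L-Faster-D-SSSP}, and \Cref{L-BundleDijkstra}; you bound the three steps by $O(mk)$, $O(mk\log^{1/2}k\log\log^{1/2}k)$, and $O(mk+\frac{m}{k}\log n)$; and you balance via the stated choice of $k$. Your extra bookkeeping (monotonicity of $x\mapsto\log^{1/2}x\log\log^{1/2}x$, $\log(ck)=O(\log k)$, and $\log k=\Theta(\log\log n)$) makes explicit what the paper's proof leaves implicit.
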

\begin{proof}
    Correctness follows directly from \Cref{L-BF-good-case} (bundles and $b(v)$), \Cref{L-Faster-D-SSSP} (balls), and \Cref{L-BundleDijkstra} (\BundleDijkstra).
    The running time of the algorithm is:
    \begin{enumerate}
        \item Computing $\Bundle(s)$ for every $s\in S$, and $b(v)$ for every $v\in V$, takes $O(mk)$ time (\Cref{L-BF-good-case}).
        \item Computing $\Ball(v)$ for every $v\in V$ by $n$ calls to \BoundSSSP with $B=\delta(v,b(v))$, each on a set of size $|\Ball(v)|\le ck$, takes $O(mk\log^{1/2}{k}\log\log^{1/2}{k})$ time.
        \item Computing SSSP by calling \BundleDijkstra (\Cref{L-BundleDijkstra}) takes $O(mk + (m/k)\log{n})$ time.
    \end{enumerate}

    Combining the three steps, the total running time is
    $$O\!\left(mk\log^{1/2}{k}\log\log^{1/2}{k}+\frac{m}{k}\log{n}\right).$$
    Setting $k=\log^{1/2}{n}\cdot\log\log^{-1/4}{n}\cdot\log\log\log^{-1/4}{n}$ balances the terms and yields a running time of
    $$O\!\left(m\log^{1/2}{n}\log\log^{1/4}{n}\log\log\log^{1/4}{n}\right),$$ as required.
\end{proof}

\subsection{Computing $\delta(v,b(v))$ for every $v\in V$ in the general case}\label{S-Active-BF}

In this section, we remove the balanced case assumption and extend our analysis to an arbitrary randomly sampled set $S$. Recall that \Cref{L-good-case} assumes that $|\Ball(v)|\le ck$ for every $v\in V$. Under random sampling, however, this assumption cannot be guaranteed to hold simultaneously for all vertices.
For example, in a path graph with unit weights, w.h.p.\ there are two sampled vertices $s_1,s_2\in S$ such that $\delta(s_1,s_2)=\Omega(k\log n)$ and there is no vertex from $S$ on the path between $s_1$ and $s_2$. In such a case, vertices near the middle of $P(s_1,s_2)$ have balls of size $\Omega(k\log n)$, far exceeding the $O(k)$ bound of the balanced case.

What holds with high probability is only the weaker global bound $\sum_{v\in V}|\Ball(v)|=O(nk)$. 
Our
algorithm uses this weaker bound together with two additional properties of a random sample, all stated in the following lemma:

\begin{restatable}{lemma}{lemmaSampling}\label{L-sampling}
    Let $\log^{0.25}{n}\le k\le \log^{0.99}{n}$, and let $S$ be a set in which each vertex of $V$ is picked independently with probability $1/k$. Then the following properties hold with high probability:
    \begin{itemize}
        \item $|S|=O(n/k)$.
        \item $\sum_{v\in V}|\Ball(v)| = O(nk)$.
        \item $\sum_{v\in V}|\Ball(v)|\cdot \log^{1/2}|\Ball(v)|\cdot\log\log^{1/2}|\Ball(v)| = O(nk\log^{1/2}{k}\log\log^{1/2}{k})$.
    \end{itemize}
\end{restatable}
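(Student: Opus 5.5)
We handle the three items separately, each via a per-vertex tail bound combined with a union bound or a Chernoff-type argument. The first item is immediate: $|S|$ is a sum of $n$ independent Bernoulli$(1/k)$ variables with mean $n/k\ge n/\log^{0.99}n$, so a multiplicative Chernoff bound gives $|S|\le 2n/k$ with probability $1-\exp(-\Omega(n/k))$, which is w.h.p.

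For the ball sizes, we fix $v$ and order all vertices by $\prec$-distance from $v$ as $v=w_1,w_2,\dots,w_n$. This order is deterministic and independent of $S$, because $\prec$ depends only on $G$. Then $\Ball(v)=\{w_1,\dots,w_j\}$, where $j$ is the first index with $w_j\in S$. Hence $|\Ball(v)|=X_v$ is a geometric random variable with parameter $1/k$, truncated at $n$, and
\[
\Pr[X_v>t]=(1-1/k)^t\le e^{-t/k}.
\]
In particular, $\EE[X_v]\le k$, and w.h.p.\ $X_v\le ck\ln n$ for all $v$ simultaneously by a union bound.

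The difficulty is that the $X_v$ are far from independent: neighbouring vertices share most of their orders. To obtain concentration of $\sum_v X_v$ we split $X_v$ into layers. For $i\ge 0$, let $A_i$ be the set of vertices with $X_v\in(2^ik,2^{i+1}k]$. Then
\[
\sum_v X_v\le 2nk+\sum_{i\ge1}2^{i+1}k\,|A_i|,
\]
and it suffices to show that $|A_i|\le n\,e^{-2^{i}}\cdot O(1)$ plus lower-order slack, simultaneously for all $i\le\log(c\ln n)$.

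The event $X_v>2^ik$ depends only on the sample restricted to the first $2^ik$ vertices $N_i(v)$ in $v$'s order. We plan to control $|A_i|$ by a bounded-differences (McDiarmid) or Janson-type argument, using the fact that changing the sampling decision of a single vertex $w$ only affects those $v$ with $w\in N_i(v)$.

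The main obstacle is bounding the number of such $v$, since $w$ could lie in $N_i(v)$ for many $v$. We would handle it in one of two ways. The first option is to use the degree bound: after the degree reduction, the graph has maximum degree $m/n=O(1)$. The second option, which we would try first, is to bound the relevant quantity in expectation and use Markov's inequality only in the failure direction. Concretely, we would condition on the high-probability event that every $X_v\le ck\ln n$ and then apply a Chernoff bound to independent blocks.

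If this direct route fails, the fallback is to bound $\sum_v X_v$ via its $q$-th moment with $q=\Theta(\log n/\log\log n)$. This gives a $1-n^{-c}$ guarantee with only an $O(1)$ loss, using the range condition $k\ge\log^{0.25}n$ to absorb the $\log n$-sized worst-case terms: each term $X_v$ is at most $O(k\log n)$, which is $o(nk)$ per unit of deviation.

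For the third item, we write $f(x)=\log^{1/2}x\,\log\log^{1/2}x$ and split the vertices into two groups. Vertices with $X_v\le k^2$ contribute at most
\[
X_v f(k^2)=O\!\left(X_v f(k)\right),
\]
since $f(k^2)=O(f(k))$. Summing and using the second item gives $O(nkf(k))$. Vertices with $X_v>k^2$ have tail probability $e^{-k}$, and they contribute at most $O(k\log n\cdot f(k\log n))$ each. Using $k\ge\log^{0.25}n$, their expected number is at most $ne^{-k}$, and by the same concentration argument it is $O(ne^{-k}+\log n)$ w.h.p. Their total contribution is therefore $o(nk)$, because $e^{-k}k\log^{2}n=o(1)$ and $\log^{3}n=o(n)$. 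Finally, the upper condition $k\le\log^{0.99}n$ guarantees $\log k=\Theta(\log\log n)$, so $f(k\log n)=O(f(k))$ and the constants match.
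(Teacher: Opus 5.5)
\textbf{There is a genuine gap in the concentration step.} Your handling of $|S|$, the law $\Pr[X_v>t]=(1-1/k)^t$, and the union-bound radius event is correct. But the heart of the lemma, a high-probability bound on $\sum_v X_v$ for strongly dependent $X_v$, is never proved. You list McDiarmid/Janson, Markov plus ``independent blocks'', and a $q$-th moment bound as options. None is carried out, and as described none works beyond the lowest layers.

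Flipping the sample bit of a single $w$ can change $\mathbf{1}[X_v>t]$ for every $v$ with $w\in N_t(v)$. Since $P(v,w)\subseteq N_t(v)$, such $v$ lie within $t-1$ hops of $w$, so there are at most $\Delta^{t}$ of them. This bound is essentially attained. Take a complete binary tree rooted at $w$ with weight $3^i$ on the edges entering depth $i$. For every $v$ at depth below $t$, the $t$ nearest vertices include all of $v$'s ancestors, so $w\in N_t(v)$ for $2^t-1$ vertices.

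Your thresholds are of order $k\log n$ (the radius) or $k^2$ (your cut in the third item). At that range one bit can influence $\Theta(n)$ terms. So bounded differences, colouring the dependency graph into independent blocks, and moment bounds all give nothing. Conditioning on the radius event does not shrink this range. Markov cannot reach failure probability $n^{-c}$ with only $O(1)$ loss. The same objection defeats your claim that $\#\{v:X_v>k^2\}=O(ne^{-k}+\log n)$ w.h.p.\ ``by the same concentration argument''.

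\textbf{The missing idea is to truncate at $K'=\Theta(k\log k)$ instead of at the radius.} Each $N_{K'}(v)$ then meets at most $K'\Delta^{K'}=2^{O(k\log k)}=n^{o(1)}$ other such sets. This, not $\log k=\Theta(\log\log n)$, is what the hypothesis $k\le\log^{0.99}n$ is needed for. With dependence this small, a bounded-dependence concentration bound applies to both $\min(X_v,K')$ and $\mathbf{1}[X_v>K']$; McDiarmid also suffices at this range. This gives, w.h.p., $\sum_v\min(X_v,K')=O(nk)$ and $\#\{v:X_v>K'\}=O(n/\log^3 n)$. For the second bound, with $K'\ge 12k\ln k$ the mean is at most $nk^{-12}\le n/\log^3 n$, using $k\ge\log^{0.25}n$.

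On the radius event, $X_v\le\min(X_v,K')+O(k\log n)\cdot\mathbf{1}[X_v>K']$, which gives item two. For item three, split at $K'$ instead of $k^2$ and use $f(K')=O(f(k))$. Your layered McDiarmid plan can be completed this way for layers up to $K'$. Everything above $K'$, however, must be charged collectively through $\#\{v:X_v>K'\}$.
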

\begin{proof}
The proof of this lemma appears in \Cref{A-sampling}.    
\end{proof}

We modify the \ImprovedSSSPBalanced algorithm so that it no longer relies on the balanced case assumption $|\Ball(v)| = O(k)$ for every $v \in V$.
Instead, the modified algorithm (\ImprovedSSSP) works with a randomly sampled set $S$, for which only the weaker global bound $\sum_{v \in V} |\Ball(v)| = O(nk)$ is guaranteed, as proved in \Cref{L-sampling}.

Steps~\ref{step:balls} and~\ref{step:bundle-dijkstra} of the \ImprovedSSSPBalanced algorithm remain unchanged for a random sample $S$ and retain the same running time.
In step~\ref{step:balls}, the total cost of calling \BoundSSSP over all vertices is $\sum_{v \in V} O\!\left(\frac{m}{n}|\Ball(v)| \log^{1/2} |\Ball(v)| \log\log^{1/2} |\Ball(v)|\right)$. By \Cref{L-sampling}, this sum is bounded by $O\!\left(mk \log^{1/2} k \log\log^{1/2} k\right)$, as in the balanced case.
In step~\ref{step:bundle-dijkstra}, we call \BundleDijkstra. Its running time is $O\!\left(\sum_{v \in V} |\Ball(v)| + \frac{m}{k}\log n\right)$, provided that $|S| = O(n/k)$. By \Cref{L-sampling}, we have both $|S| = O(n/k)$ and $\sum_{v \in V} |\Ball(v)| = O(nk)$. Therefore, \BundleDijkstra runs in $O\!\left(mk + \frac{m}{k}\log n\right)$ time, as in the balanced case.

It remains to address step~\ref{step:bundles}, namely, how to compute $\delta(v,b(v))$ for every $v \in V$ and construct the bundles in $O(mk)$ time for a randomly sampled set $S$, without assuming that $|\Ball(v)| \le ck$ for every $v \in V$.
The algorithm \BoundedBellmanFord (\Cref{alg:bounded-bf}) is no longer applicable, since it terminates after a fixed number, $\ceil{ck+1}$, of Bellman--Ford iterations. This stopping criterion relies on the assumption that $|\Ball(v)| \le ck$ and on prior knowledge of the constant $c$. 

To overcome this difficulty, we use the active-vertex (queue-based) Bellman--Ford algorithm~\cite{moore1959shortest} from an auxiliary super-source.
We prove that it computes $\delta(v,b(v))$ for every $v\in V$ in $O\!\left(\sum_{v\in V} |\Ball(v)| \cdot m/n\right)$ time (\Cref{T-active-BF}), which, combined with the bound $\sum_{v\in V} |\Ball(v)| = O(nk)$ (\Cref{L-sampling}), yields an $O(mk)$ running time (\Cref{C-active-BF-whp}).

The \ActiveBellmanFord algorithm (\Cref{alg:active-bf}) takes as input a graph $G$ and a set $S \subseteq V$, and maintains three structures: for each vertex $v \in V$, a tentative distance $d[v] \in \mathbb{R}_{\ge 0} \cup \{\infty\}$ to the nearest sample, a tentative nearest sample $b[v] \in S$, and a set $\mathit{Active} \subseteq V$ of vertices whose tentative distance decreased in the previous round of the algorithm.
The algorithm initializes $d[s]=0$ and $b[s]=s$ for every $s\in S$, $d[v]=\infty$ for every $v\notin S$, and $\mathit{Active}=S$.
The algorithm then proceeds in rounds until $\mathit{Active}=\emptyset$.
In each round, for every vertex $v\in\mathit{Active}$ the algorithm relaxes each incident edge $(v,w)\in E$: if $d[v]+\wt(v,w) < d[w]$, the algorithm sets $d[w]\leftarrow d[v]+\wt(v,w)$ and $b[w]\leftarrow b[v]$.
The set of vertices whose tentative distance decreased during the round becomes the set $\mathit{Active}$ of the next round.
When $\mathit{Active}=\emptyset$ the algorithm returns $\{d[v]\}_{v\in V}$ and $\{b[v]\}_{v\in V}$.
See \Cref{alg:active-bf} for pseudocode and \Cref{fig:active-bf-rounds} for an illustration of its execution.
\colorlet{csample}{blue!65!black}      

\begin{figure}[t]
\centering
\begin{tikzpicture}[
    vert/.style  = {circle, draw=black,   fill=white,   minimum size=15pt, inner sep=0pt, line width=0.8pt, font=\scriptsize},
    svert/.style = {circle, draw=csample, fill=blue!12, minimum size=15pt, inner sep=0pt, line width=1.2pt, font=\scriptsize, text=csample},
    multi/.style = {fill=orange!25},
    gedge/.style = {gray!55, line width=0.9pt},
]

\draw[gedge] (1.2,3.2) -- (0.0,1.9)  node[midway, sloped, above, font=\scriptsize, black] {$1$};
\draw[gedge] (1.2,3.2) -- (2.2,1.9)  node[midway, sloped, below, font=\scriptsize, black] {$2$};
\draw[gedge] (1.2,3.2) -- (4.0,1.6)  node[midway, sloped, above, font=\scriptsize, black] {$10$};
\draw[gedge] (0.0,1.9) -- (2.2,1.9)  node[midway, above, font=\scriptsize, black] {$6$};
\draw[gedge] (0.0,1.9) -- (1.1,0.6)  node[midway, sloped, above, font=\scriptsize, black] {$1$};
\draw[gedge] (2.2,1.9) -- (1.1,0.6)  node[midway, sloped, above, font=\scriptsize, black] {$3$};
\draw[gedge] (6.8,3.2) -- (5.2,2.0)  node[midway, sloped, above, font=\scriptsize, black] {$1$};
\draw[gedge] (6.8,3.2) -- (6.6,1.9)  node[midway, right, font=\scriptsize, black] {$1$};
\draw[gedge] (6.8,3.2) -- (8.2,1.9)  node[midway, sloped, above, font=\scriptsize, black] {$2$};
\draw[gedge] (5.2,2.0) -- (4.0,1.6)  node[midway, sloped, above, font=\scriptsize, black] {$1$};
\draw[gedge] (4.0,1.6) -- (4.0,0.1)  node[midway, right, font=\scriptsize, black] {$1$};
\draw[gedge] (6.6,1.9) -- (6.9,0.6)  node[midway, right, font=\scriptsize, black] {$1$};
\draw[gedge] (8.2,1.9) -- (6.9,0.6)  node[midway, sloped, above, font=\scriptsize, black] {$1$};
\draw[gedge] (1.1,0.6) -- (4.0,0.1)  node[midway, sloped, above, font=\scriptsize, black] {$4$};
\draw[gedge] (4.0,0.1) -- (6.9,0.6)  node[midway, sloped, above, font=\scriptsize, black] {$5$};

\node[svert] at (1.2,3.2) {$s_1$};
\node[svert] at (6.8,3.2) {$s_2$};
\node[vert] at (0.0,1.9)  {$1$};
\node[vert] at (2.2,1.9)  {$1$};
\node[vert] at (5.2,2.0)  {$1$};
\node[vert] at (6.6,1.9)  {$1$};
\node[vert] at (8.2,1.9)  {$1$};
\node[vert] at (1.1,0.6)  {$2$};
\node[vert] at (6.9,0.6)  {$2$};
\node[vert, multi] at (4.0,1.6) {$1,2$};
\node[vert, multi] at (4.0,0.1) {$2,3$};

\end{tikzpicture}
\caption{%
    Execution of \ActiveBellmanFord (\Cref{alg:active-bf}). Samples $S=\{s_1,s_2\}$ are settled in round $0$, and each number is the round in which a vertex's tentative distance was updated.
    Notice that orange vertices are updated twice, since a shorter path from $s_2$ corrects the initial estimate from $s_1$.
}
\label{fig:active-bf-rounds}
\end{figure}

We remark that \Cref{alg:active-bf} is precisely active-vertex Bellman--Ford run from an auxiliary super-source $r$ in $G'=(V\cup\{r\},E\cup\{(r,s,0)\mid s\in S\})$: initializing $d[s]=0$ for every $s\in S$ corresponds to relaxing the edges $(r,s)$, and each iteration of the while loop corresponds to one round of relaxations.

Throughout, we count the initialization as round $0$ (corresponding to relaxing the implicit weight-$0$ edges $(r,s)$ for $s\in S$), and the $i$-th iteration of the while loop as round $i$.

An important property of the \ActiveBellmanFord algorithm is that it does not require knowledge of $ck$ in advance. The algorithm terminates when $\mathit{Active}=\emptyset$. We will show that at this point all values $\delta(v,b(v))$ have been correctly computed.
We remark that the restriction to active vertices is crucial for efficiency. 
Relaxing the edges of every vertex in every round would cost $O(m)$ per round, and the number of rounds is $\max_{v\in V}{|P(v,b(v))|}$, which w.h.p.\ is $\Omega(k\log n)$.
This gives $\Omega(mk\log n)$ overall, a $\log{n}$ factor more than our $O(mk)$ goal. 
By restricting relaxation only to active vertices, we get:

\begin{algorithm2e}[t]
\caption{$\ActiveBellmanFord(G,S)$}\label{alg:active-bf}
\DontPrintSemicolon
\lForEach{$v\in V$}{$d[v]\leftarrow\infty$}
\ForEach{$s\in S$}{$d[s]\leftarrow 0$\; $b[s]\leftarrow s$}
$\mathit{Active}\leftarrow S$\;
\While{$\mathit{Active}\neq\emptyset$}{
    $\mathit{NextActive}\leftarrow\emptyset$\;
    \ForEach{$v\in\mathit{Active}$}{
        \ForEach{$(v,w)\in E$}{
            \If{$d[v]+\wt(v,w)<d[w]$}{
                $d[w]\leftarrow d[v]+\wt(v,w)$\;
                $b[w]\leftarrow b[v]$\;
                $\mathit{NextActive}\leftarrow\mathit{NextActive}\cup\{w\}$\;
            }
        }
    }
    $\mathit{Active}\leftarrow\mathit{NextActive}$\;
}
\Return $\{d[v]\}_{v\in V}$, $\{b[v]\}_{v\in V}$\;
\end{algorithm2e}
\thmActiveBF*
\begin{proof}
    We first sketch the intuition. The analysis is a charging argument. A vertex $v$ is active only when its tentative distance decreases, so it is active at most once for each edge on its shortest path to the nearest sample $b(v)$. This path lies inside $\Ball(v)$, so $v$ is active at most $|\Ball(v)|$ times, and each activation relaxes its $\deg(v)$ incident edges. Summing $|\Ball(v)|\cdot\deg(v)$ over all vertices, and bounding each $\deg(v)$ by $O(m/n)$, gives the claimed running time $O\!\left(\sum_{v\in V}|\Ball(v)|\cdot m/n\right)$.

    
    First, we prove, by induction on $i$, that after round $i$ we have $d[v]\le\delta^{\le i+1}_{G'}(r,v)$ for every $v\in V$.
    For the base case, after round $0$ we have $d[s]=0=\delta^{\le 1}_{G'}(r,s)$ for every $s\in S$, and $d[v]=\infty=\delta^{\le 1}_{G'}(r,v)$ for every $v\notin S$.

    For the inductive step, we assume that after round $i-1$ we have $d[v]\le\delta^{\le i}_{G'}(r,v)$ for every $v\in V$ and show that after round $i$ we have $d[v]\le\delta^{\le i+1}_{G'}(r,v)$ for every $v\in V$.
    Let $(x,v)$ be the last edge of $P^{\le i+1}_{G'}(r,v)$. 

    By the induction hypothesis for $x$, after round $i-1$ we have $d[x]\le\delta^{\le i}_{G'}(r,x)$, and $d[x]$ attained its current value in some round $j\le i-1$;
    therefore, in round $j+1$ the vertex $x$ was active and the edge $(x,v)$ was relaxed, so after round $j+1$ we have $d[v]\le d[x]+\wt(x,v)\le\delta^{\le i}_{G'}(r,x)+\wt(x,v)=\delta^{\le i+1}_{G'}(r,v)$, as required.
    
    Since $d[v]$ is always the length of some path from $r$ to $v$ in $G'$, we have that $d[v]\ge\delta_{G'}(r,v)$.
    Therefore, after round $i$ we have $\delta_{G'}(r,v)\le d[v]\le\delta^{\le i+1}_{G'}(r,v)$, for every $v\in V$.
    By definition, if $|P_{G'}(r,v)|=i+1$ then $\delta^{\le i+1}_{G'}(r,v)=\delta_{G'}(r,v)$. Thus, $d[v]$ reaches its final value $\delta_{G'}(r,v)=\delta(v,b(v))$ after at most $|P_{G'}(r,v)|-1$ rounds.
    Once this happens, $b[v]=b(v)$ by \Cref{L-labels}, which is stated for exactly this in-place execution. 
    Grouping the vertices by $b[v]$ then yields $\Bundle(s)$ for every $s\in S$ in $O(n)$ additional time.

    Having established correctness, it remains to analyze the running time, governed by the number of rounds each vertex is active. 
    Since the path $P_{G'}(r,v)$ is the concatenation of the zero-weight edge $(r,b(v))$ with the path $P_G(b(v),v)$, and the graph is undirected, $|P_{G'}(r,v)|=1+|P_G(v,b(v))|$.

    Since every $x\in P_G(v,b(v))$ with $x\neq b(v)$ satisfies $\delta(v,x)\prec\delta(v,b(v))$ (item~\ref{P-sp-prefix} of \Cref{P-shortest-paths}), and since $b(v)\in\Ball(v)$ by definition, we have $P_G(v,b(v))\subseteq \Ball(v)$ and therefore $|P_G(v,b(v))|\le|\Ball(v)|$.

    Thus, after round $|P_G(v,b(v))|=O(|\Ball(v)|)$ we have $d[v]=\delta(v,b(v))$, and afterwards $v$ cannot be active anymore.
    Hence, $v$ is in $\mathit{Active}$ at most $O(|\Ball(v)|)$ times, and each time the algorithm triggers the relaxation of its $\deg(v)$ incident edges.

    Summing over all vertices, and using the fact that the degree is bounded by $O(m/n)$, the total running time is
    $$O\!\left(\sum_{v\in V}|\Ball(v)|\cdot \frac{m}{n}\right),$$
    as required.
\end{proof}
For a random sample, \Cref{L-sampling} gives $\sum_{v\in V}|\Ball(v)|=O(nk)$ with high probability, and substituting this into \Cref{T-active-BF} yields our desired running time of $O(mk)$.
\begin{restatable}{corollary}{corActiveBF}\label{C-active-BF-whp}
    If $S$ is formed by sampling each vertex of $V$ independently with probability $1/k$, where $\log^{0.25}{n}\le k\le\log^{0.99}{n}$, then $\ActiveBellmanFord(G,S)$ (\Cref{alg:active-bf}) runs in $O(mk)$ time with high probability.
\end{restatable}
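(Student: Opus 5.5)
The plan is to combine \Cref{T-active-BF} with the second item of \Cref{L-sampling}, taking care over one subtlety: \Cref{T-active-BF} assumes the graph has already been degree-reduced.

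First, fix the setting. Following \Cref{S-degree-reduction}, we assume $G$ has been preprocessed so that it has $O(n)$ vertices, $O(m)$ edges, and maximum degree $O(m/n)$. The sample $S$ is drawn on the vertex set of this preprocessed graph, with each vertex included independently with probability $1/k$. Since the preprocessed vertex count $n'$ is $\Theta(n)$, the range $\log^{0.25}n\le k\le\log^{0.99}n$ still satisfies the hypothesis of \Cref{L-sampling} with $n'$ in place of $n$, possibly after adjusting constants in the exponent window. Likewise, ``high probability'' with respect to $n'$ coincides with high probability with respect to $n$, since polynomial factors in $n$ are absorbed into the constant $c$.

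Second, apply \Cref{T-active-BF}. It is a deterministic statement valid for any non-empty $S$, and it bounds the running time of \Cref{alg:active-bf} by $O\bigl(\sum_{v\in V}|\Ball(v)|\cdot\frac{m}{n}\bigr)$. The corner case $S=\emptyset$ occurs with probability $(1-1/k)^{n}\le e^{-n/k}$, which is negligible; we fold it into the failure event. If one prefers, $S=\emptyset$ can also be handled trivially, since the algorithm then terminates immediately.

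Third, invoke the second item of \Cref{L-sampling}, which gives $\sum_{v\in V}|\Ball(v)|=O(nk)$ with high probability. Substituting this into the bound from \Cref{T-active-BF} gives $O(nk\cdot m/n)=O(mk)$ on this event. A union bound over the two failure events (the sum bound failing, or $S=\emptyset$) keeps the failure probability at $O(n^{-c})$ for every constant $c$. The $O(n)$ initialization is dominated by $O(mk)$ because $m\ge n-1$.

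There is no real obstacle here; the work lives in \Cref{L-sampling}. The only point needing care is the bookkeeping that the sampling lemma is applied to the same (degree-reduced) graph on which the $m/n$ degree bound of \Cref{T-active-BF} holds, with $n$ and $m$ changing only by constant factors.
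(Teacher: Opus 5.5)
Your proposal is correct and takes the same approach as the paper, which proves the corollary by substituting the w.h.p.\ bound $\sum_{v\in V}|\Ball(v)|=O(nk)$ from the second item of \Cref{L-sampling} into the deterministic bound of \Cref{T-active-BF}. Your extra bookkeeping is sound and only makes explicit what the paper leaves implicit: applying the sampling lemma on the degree-reduced graph with $n'=\Theta(n)$, and absorbing the negligible event $S=\emptyset$ into the failure probability.
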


We are now ready to present the complete algorithm and prove \Cref{T-Improved-undirected}. 
The algorithm $\ImprovedSSSP$ is composed of three steps. First, it sets $k=\log^{1/2}n\cdot\log\log^{-1/4}n\cdot\log\log\log^{-1/4}n$ and forms $S$ by sampling each vertex independently with probability $1/k$.
\begin{enumerate}
    \item\label{step:bundles-} Run $\ActiveBellmanFord(G,S)$ (\Cref{alg:active-bf}) to compute $\delta(v,b(v))$ and $b(v)$ for every $v\in V$, and the bundles $\Bundle(s)$ for every $s\in S$ (\Cref{T-active-BF}).
    \item\label{step:balls-} Run $\BoundSSSP(v,B)$ for every $v\in V$ with bound $B=\delta(v,b(v))$ to obtain $\Ball(v)$.
    \item\label{step:bundle-dijkstra-} Call \BundleDijkstra of \cite{duan2023randomized} with the bundles computed in step~\ref{step:bundles-} and the balls computed in step~\ref{step:balls-}, and return the resulting shortest-path distances.
\end{enumerate}

\theoremImprovedUndirected*
\begin{proof}
    The correctness of \ImprovedSSSP follows directly from \Cref{T-active-BF} (bundles and $b(v)$), \Cref{L-Faster-D-SSSP} (balls), and \Cref{L-BundleDijkstra} (\BundleDijkstra).

    Notice that $\log^{0.25}n\le k\le \log^{0.99}n$, and therefore, w.h.p.\ all the properties of \Cref{L-sampling} hold. The running time of the algorithm is:
    \begin{enumerate}
        \item Computing $\delta(v,b(v))$ for every $v\in V$ and the bundles takes $O\!\left(\sum_{v\in V}|\Ball(v)|\cdot\frac{m}{n}\right)$ time (\Cref{T-active-BF}), which is $O(mk)$ w.h.p.\ by \Cref{C-active-BF-whp}.
        \item Running $\BoundSSSP(v,B)$ for every $v\in V$ with bound $B=\delta(v,b(v))$ takes\\ $O(\frac{m}{n}\sum_{v\in V}|\Ball(v)|\log^{1/2}|\Ball(v)|\log\log^{1/2}|\Ball(v)|)$ time (\Cref{L-Faster-D-SSSP}). By the third property of \Cref{L-sampling}, this is $O(mk\log^{1/2}{k}\log\log^{1/2}{k})$ w.h.p.
        \item Computing SSSP by calling \BundleDijkstra (\Cref{L-BundleDijkstra}) takes $O\!\left(\sum_{v\in V}|\Ball(v)| + \frac{m}{k}\log{n}\right)$ time (\Cref{L-BundleDijkstra}), which is $O(mk + (m/k)\log{n})$ w.h.p.\ by \Cref{L-sampling}.
    \end{enumerate}

    Combining the cost of the three steps, we get that the total running time w.h.p.\ is
    $$O\!\left(mk\log^{1/2}{k}\log\log^{1/2}{k}+\frac{m}{k}\log{n}\right),$$
    and the choice of $k$ balances the two terms and yields the desired running time
    $$O\!\left(m\log^{1/2}n\log\log^{1/4}n\log\log\log^{1/4}n\right).$$
\end{proof}

\section{Discussion and open problems}\label{S-discussion}
We close by discussing the scope of our result, namely its behavior for general $m$, and then the problems it leaves open: improving the construction of the balls, extending the improvement to directed graphs, and achieving it deterministically.

\paragraph{General $m$.}
We do not actually need the maximum degree to be constant, and our result is not restricted to $m=O(n)$. 
The degree reduction of \Cref{S-degree-reduction} applies for any $m$, and the analysis carries over with the resulting degree bound $O(m/n)$. 
Our algorithm differs from that of~\cite{duan2023randomized} only in how it constructs the bundles and balls: it computes the bundles by \Cref{T-active-BF} in $O(mk)$ time, and the balls by the bounded-distance Dijkstra \BoundSSSP of~\cite{duan2026faster} (\Cref{L-Faster-D-SSSP}).
Together these replace the $O(mk\log{k})$ construction of~\cite{duan2023randomized} by $O(mk\log^{1/2}{k}\log\log^{1/2}{k})$, while \BundleDijkstra and the rest of the algorithm are unchanged. 
Our algorithm's running time is therefore never worse than that of~\cite{duan2023randomized} for any $m$, and the saving materializes precisely when the construction is the dominant term, as it is for sparse graphs. 
Once $m=\Omega(n\log\log{n})$ the construction is no longer the bottleneck, and our running time coincides with the $O(\sqrt{mn\log{n}})$ bound of~\cite{duan2023randomized} for $m\le n\log{n}$.

\paragraph{The ball construction bottleneck.}
\Cref{T-active-BF} computes the distances $\delta(v,b(v))$ and the bundles in $O(mk)$ time, so the running time of our algorithm is now dominated by the construction of the balls, which takes $O(mk\log^{1/2}{k}\log\log^{1/2}{k})$ time using \BoundSSSP of~\cite{duan2026faster} (\Cref{L-Faster-D-SSSP}).
Any improvement to the ball construction immediately improves the overall running time.
In particular, an $O(mk)$ time construction of all the balls would, by setting $k=\log^{1/2}{n}$, yield an $O(m\log^{1/2}{n})$ time algorithm for undirected SSSP.
Whether the balls can be constructed in $O(mk)$ time, and, more generally, what the true complexity of SSSP in the comparison-addition model is, are intriguing open problems.

\paragraph{Directed graphs.}
The active-vertex Bellman--Ford primitive itself extends to directed graphs.
Both Bellman--Ford and the super-source reduction $\delta_{G'}(r,v)=\delta(v,b(v))$ are direction-agnostic, and the amortization goes through when the analysis is carried out with respect to in-balls,\footnote{In a directed graph the ball splits into the \emph{out-ball} $\{w\mid\delta(v,w)<B\}$, using the distances out of $v$, and the \emph{in-ball} $\{w\mid\delta(w,v)<B\}$, using the distances into $v$; the two coincide when $G$ is undirected.} the balls that govern the super-source computation.
The obstruction to a directed result lies instead in \BundleDijkstra (\Cref{L-BundleDijkstra}) of~\cite{duan2023randomized}, which relies on the graph being undirected so that the in-ball and out-ball of each vertex coincide.
We leave as an open problem the adaptation of \BundleDijkstra to directed graphs, which would make our improvement carry over to directed SSSP as well.

\paragraph{Derandomization.}
Our improvement is essentially orthogonal to randomness. Given any $S$ satisfying the three conclusions of \Cref{L-sampling}, the entire algorithm is deterministic, so derandomization reduces to constructing a single such $S$.
The difficulty is one of cost, not compatibility. Sampling produces a good $S$ for free, whereas a deterministic construction must certify $\sum_{v}|\Ball(v)|=O(nk)$, and the natural certificate, growing the balls, costs $O(mk\log{k})$ and reintroduces the $\log{k}$ factor we eliminated from the bundle construction. An improved deterministic algorithm therefore requires producing $S$ in $o(mk\log{k})$ time, rather than reusing the derandomization of~\cite{yan2026lossless}, which is built around growing these balls. 
A natural path toward such a construction would be to adapt the previous derandomization techniques so that they grow the balls simultaneously, as our algorithm does, rather than one at a time, though it is not clear how to carry this out within the required budget.
We leave a faster deterministic construction of $S$ as the main obstacle to an improved deterministic algorithm.

\bibliographystyle{alpha}
\bibliography{articles}

\appendix
\section{Bounding bundle and ball sizes w.h.p.}\label{A-sampling}

\begin{lemma}[Bounded-dependence concentration~\cite{duan2023randomized}]\label{L-dependence}
    Let $\{Z_v\}_{v\in I}$ be random variables such that for each $v\in I$, $\EE[Z_v]=\mu$ and $Z_v\in[0,\beta]$ with probability $1$. Suppose each $Z_v$ is determined by a fixed deterministic set $W_v$, in the sense that whenever $W_{v_1},\dots,W_{v_j}$ are pairwise disjoint the variables $Z_{v_1},\dots,Z_{v_j}$ are independent, and each $W_v$ intersects at most $T$ of the other $W_u$. Then with probability at least $1-8T\beta\mu^{-1}e^{-\mu^3|I|/(8\beta^3T)}$,
    $$\sum_{v\in I}Z_v=\Theta(|I|\,\mu).$$
\end{lemma}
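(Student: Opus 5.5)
The plan is to split $I$ into a small number of classes, each consisting of indices with pairwise disjoint sets $W_v$, so that the variables inside a class are independent. We then apply Hoeffding's inequality to each class and take a union bound over the classes.

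\emph{Step 1: the dependency graph.} Define a graph $H$ on vertex set $I$ with an edge $\{u,v\}$ whenever $W_u\cap W_v\neq\emptyset$. By hypothesis $H$ has maximum degree at most $T$. By the Hajnal--Szemer\'edi theorem, $H$ has an \emph{equitable} proper coloring with $T+1$ colors. Each color class $C$ is then an independent set of $H$. Its sets $W_v$ are pairwise disjoint, so $\{Z_v\}_{v\in C}$ are mutually independent, and $|C|\ge\floor{|I|/(T+1)}$.

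\emph{Step 2: dispose of degenerate parameter ranges.} Note that $\mu\le\beta$, since $Z_v\in[0,\beta]$.
\begin{itemize}
\item If $|I|\le 8\beta^3T/\mu^3$, the exponent is at least $-1$. Then the claimed failure probability is at least $8T(\beta/\mu)e^{-1}\ge 8/e>1$ for $T\ge1$, so the statement is vacuous.
\item If $T=0$, all variables are independent and Step~3 applies directly with a single class.
\end{itemize}
Otherwise $|I|\ge 8T\ge T+1$, so every class satisfies $|C|\ge |I|/(2(T+1))\ge |I|/(4T)$.

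\emph{Step 3: Hoeffding per class.} For each class $C$, Hoeffding's inequality with deviation $\tfrac12\mu|C|$ gives
$$\Pr\Bigl[\bigl|\textstyle\sum_{v\in C}Z_v-\mu|C|\bigr|>\tfrac12\mu|C|\Bigr]\le 2e^{-\mu^2|C|/(2\beta^2)}.$$
Since $\mu/\beta\le1$, we have $\mu^2/\beta^2\ge\mu^3/\beta^3$. Combined with $|C|\ge|I|/(4T)$, this makes the bound at most $2e^{-\mu^3|I|/(8\beta^3T)}$.

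A union bound over the $T+1\le 2T$ classes gives total failure probability at most $4Te^{-\mu^3|I|/(8\beta^3T)}\le 8T\beta\mu^{-1}e^{-\mu^3|I|/(8\beta^3T)}$. On the complementary event, every class sum lies in $[\tfrac12\mu|C|,\tfrac32\mu|C|]$. Summing over the classes yields $\sum_{v\in I}Z_v\in[\tfrac12\mu|I|,\tfrac32\mu|I|]$, which is $\Theta(|I|\mu)$.

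\emph{Main obstacle.} The delicate point is guaranteeing that every class is large, which is why I invoke Hajnal--Szemer\'edi. If one prefers a plain greedy $(T+1)$-coloring, some classes may be tiny. The fallback in that case is:
\begin{itemize}
\item Discard classes of size below $\mu|I|/(4\beta(T+1))$. Their total contribution is at most $\beta$ times their total size, hence at most $\mu|I|/4$, which does not affect the $\Theta(|I|\mu)$ conclusion.
\item Apply Step~3 only to the remaining large classes.
\end{itemize}
This variant is presumably where the factor $\beta/\mu$ in front of the exponential, and the cubic power $\mu^3/\beta^3$ in the exponent, come from. I would keep it as a backup, and check that its constants still fit within $8$ and $8$.
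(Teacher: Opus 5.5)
The paper gives no proof of this lemma; it imports it as a black box from \cite{duan2023randomized}, so there is no in-paper argument to match, and I judge your proposal on its own. Your main argument is correct and complete for $T\ge 1$.
\begin{itemize}
\item The Hajnal--Szemer\'edi theorem does give an equitable proper $(T+1)$-colouring of the intersection graph.
\item The case $|I|\le 8\beta^3T/\mu^3$ is correctly dismissed as vacuous, since the stated failure bound is then at least $8/e>1$.
\item Otherwise every class has size at least $|I|/(4T)$. Hoeffding's inequality plus a union bound over $T+1\le 2T$ classes then gives failure probability at most $4T e^{-\mu^2|I|/(8\beta^2T)}$.
\end{itemize}
That bound is strictly stronger than the stated one: it has no $\beta/\mu$ prefactor and only a quadratic power of $\mu/\beta$ in the exponent.

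The price of this route is an appeal to a deep colouring theorem. The shape of the stated bound suggests that the source uses the elementary route you keep as a backup: a greedy colouring, discarding small classes, and applying Hoeffding only to the large ones. There the prefactor $\propto \beta/\mu$ counts the large classes, and the cubic power comes from Hoeffding's $\mu^2/\beta^2$ times a size threshold proportional to $\mu/\beta$.

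Two small corrections.

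First, your treatment of $T=0$ is not right. There the stated bound degenerates: the prefactor is $0$ and the exponent is undefined. Your single-class Hoeffding bound $2e^{-\mu^2|I|/(2\beta^2)}$ does not deliver that. Simply assume $T\ge 1$; this is harmless, since the hypothesis ``intersects at most $T$ others'' is monotone in $T$.

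Second, in the backup, the threshold $\mu|I|/(4\beta(T+1))$ only yields the exponent $\mu^3|I|/(8\beta^3(T+1))$. This is weaker than claimed. The slack in the prefactor cannot absorb it, because the loss $e^{\mu^3|I|/(8\beta^3T(T+1))}$ is unbounded in $|I|$. Doubling the threshold to $\mu|I|/(2\beta(T+1))$ fixes this:
\begin{itemize}
\item the exponent becomes $\mu^3|I|/(4\beta^3(T+1))\ge \mu^3|I|/(8\beta^3T)$;
\item the discarded classes contribute at most $\mu|I|/2$;
\item the retained classes still have total size at least $|I|(1-\mu/(2\beta))\ge |I|/2$.
\end{itemize}
Together these give $\sum_v Z_v\in[\mu|I|/4,\,2\mu|I|]$ with failure probability at most $4Te^{-\mu^3|I|/(8\beta^3T)}$.
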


\lemmaSampling*

\begin{proof}
Fix an arbitrary constant $c\ge 1$; we show that all three properties hold simultaneously with probability at least $1-O(n^{-c})$.
Recall that $G$ is connected (\Cref{S-prelim}), and that by the degree reduction of \Cref{S-degree-reduction} (using $m=O(n)$) we may assume that the maximum degree of $G$ is at most a constant $\Delta\ge 3$.
Let $K=\ceil{(c+1)k\ln n}$ and $K'=\ceil{12k\ln k}$, and note that $12k\ln k\le K'\le 13k\ln k$ and $K'\le K\le n$ for all sufficiently large $n$.

\paragraph{Size of $S$.}
Let $X_v=\mathbf{1}[v\in S]$ be the indicator that $v$ is sampled; then $X_v\sim\mathrm{Bernoulli}(1/k)$ independently, and $|S|=\sum_{v\in V}X_v$, so $\EE[|S|]=n/k$. By the multiplicative Chernoff bound, $\Pr[|S|\ge 2n/k]\le e^{-n/(3k)}$; since $k\le\log^{0.99}{n}$, we have $n/k=\omega(\log n)$, so $|S|=O(n/k)$ with probability $1-n^{-\omega(1)}$.

\paragraph{The rank of the nearest sample.}
For $v\in V$, let $N_0(v)=\emptyset$, and for $1\le t\le n$, let $N_t(v)$ be the set of the $t$ closest vertices to $v$ (including $v$ itself); this is well defined since $G$ is connected and, under the total order of \Cref{S-positive-weights}, the distances from $v$ are pairwise distinct.
Let $R_v=\min\{t\ge 1 : N_t(v)\cap S\neq\emptyset\}$ be the rank of the sample nearest to $v$ (with $R_v=\infty$ if $S=\emptyset$).
Whenever $S\neq\emptyset$, since $b(v)$ is the sample nearest to $v$ and every $w\in\Ball(v)\setminus\{b(v)\}$ satisfies $\delta(v,w)\prec\delta(v,b(v))$, we have $\Ball(v)=N_{R_v}(v)$ and hence $|\Ball(v)|=R_v$.
Moreover, since $N_t(v)$ consists of exactly $t$ vertices, each sampled independently with probability $1/k$,
$$\Pr[R_v>t]=(1-1/k)^{t}\qquad\text{for every }0\le t\le n.$$

\paragraph{The radius event.}
For every $v\in V$, $\Pr[R_v>K]=(1-1/k)^{K}\le e^{-K/k}\le n^{-(c+1)}$.
By a union bound, with probability at least $1-n^{-c}$, every $v\in V$ satisfies $R_v\le K$; we call this the \emph{radius event}.

\paragraph{Bounded dependence.}
We apply \Cref{L-dependence} twice, both times with index set $I=V$ and the sets $W_v=N_{K'}(v)$, which are deterministic (the graph and its distances are fixed) and of size $|W_v|=K'$.
We first bound the number $T$ of other sets $W_u$ intersecting a fixed $W_v$, following the dependence argument of~\cite{duan2023randomized}.
Let $u\in V$ and $w\in N_{K'}(u)$. For every $x\in P_G(u,w)$ with $x\neq w$, we have $\delta(u,x)\prec\delta(u,w)$ by item~\ref{P-sp-prefix} of \Cref{P-shortest-paths}, and therefore $x\in N_{K'}(u)$ as well. Hence $P_G(u,w)\subseteq N_{K'}(u)$, so $P_G(u,w)$ has at most $K'$ vertices, i.e., at most $K'-1$ edges.
Consequently, for a fixed $w\in V$, every $u$ with $w\in W_u$ lies within hop-distance $K'-1$ of $w$, and since the maximum degree is at most $\Delta$, there are at most $\sum_{j=0}^{K'-1}\Delta^{j}\le\Delta^{K'}$ such vertices $u$.
Summing over the $K'$ vertices $w\in W_v$, we conclude that
$$T\le K'\cdot\Delta^{K'}=2^{O(k\log k)}=2^{O(\log^{0.99}n\cdot\log\log n)}=n^{o(1)},$$
where we used $k\le\log^{0.99}{n}$.

\paragraph{Sum of ball sizes.}
For $v\in V$ define $Z_v=\min(R_v,K')$ and $H_v=\mathbf{1}[R_v>K']$.
Since $R_v>t$ if and only if $N_t(v)\cap S=\emptyset$, both $Z_v$ and $H_v$ are functions of $S\cap W_v$ alone; hence pairwise disjoint sets $W_v$ yield independent variables $Z_v$ (respectively, $H_v$), as required by \Cref{L-dependence}.

For the variables $Z_v$ we have $Z_v\in[0,K']$ with probability $1$, and, since $Z_v>t$ if and only if $R_v>t$ for every $0\le t<K'$,
$$\EE[Z_v]=\sum_{t=0}^{K'-1}\Pr[Z_v>t]=\sum_{t=0}^{K'-1}\Pr[R_v>t]=\sum_{t=0}^{K'-1}(1-1/k)^{t}=k\left(1-(1-1/k)^{K'}\right)=:\mu,$$
which is identical for every $v\in V$; since $(1-1/k)^{K'}\le e^{-K'/k}\le k^{-12}\le 1/2$, we get $\mu=\Theta(k)$.
Substituting $\mu=\Theta(k)$, $\beta=K'=O(k\log k)$, $|I|=n$, and $T=n^{o(1)}$ into \Cref{L-dependence}, the exponent is
$$\frac{\mu^3|I|}{8\beta^3T}=\frac{\Theta(k^3)\cdot n}{O(k^3\log^3 k)\cdot n^{o(1)}}=n^{1-o(1)},$$
and the prefactor is $8T\beta\mu^{-1}=n^{o(1)}$, so $\sum_{v\in V}Z_v=\Theta(n\mu)=\Theta(nk)$ with probability $1-e^{-n^{1-o(1)}}$.

For the variables $H_v$ we have $H_v\in[0,1]$ and $\EE[H_v]=(1-1/k)^{K'}=:\mu_H$, identical for every $v\in V$. Using $1-x\ge e^{-2x}$ for $x\in[0,1/2]$ and $12k\ln k\le K'\le 13k\ln k$, we get $k^{-26}\le\mu_H\le k^{-12}$; in particular $\mu_H^{-1}=O(\log^{26}{n})$ and, since $k\ge\log^{0.25}{n}$, also $\mu_H\le\log^{-3}{n}$.
Substituting $\mu_H$, $\beta=1$, $|I|=n$, and $T=n^{o(1)}$ into \Cref{L-dependence}, the exponent is $\mu_H^3n/(8T)\ge n/(O(\log^{78}{n})\cdot n^{o(1)})=n^{1-o(1)}$ and the prefactor is $8T\mu_H^{-1}=n^{o(1)}$, so
$$\sum_{v\in V}H_v=\Theta(n\mu_H)=O\!\left(\frac{n}{\log^{3}n}\right)\qquad\text{with probability }1-e^{-n^{1-o(1)}}.$$

For the rest of the proof, fix the intersection of the radius event with the two concentration events above; by a union bound, it has probability at least $1-n^{-c}-2e^{-n^{1-o(1)}}=1-O(n^{-c})$.
On this intersection, for every $v\in V$, if $H_v=0$ then $|\Ball(v)|=R_v=Z_v$, and if $H_v=1$ then $|\Ball(v)|=R_v\le K$ by the radius event; in both cases $|\Ball(v)|\le Z_v+K\cdot H_v$. Therefore
$$\sum_{v\in V}|\Ball(v)|\;\le\;\sum_{v\in V}Z_v+K\sum_{v\in V}H_v\;=\;O(nk)+O\!\left(k\log n\cdot\frac{n}{\log^{3}n}\right)\;=\;O(nk),$$
proving the second property.

\paragraph{Weighted sum of ball sizes.}
Let $f(x)=\log^{1/2}(x)\log\log^{1/2}(x)$. By the convention $\log{x}=\max\{\log_2{x},\,1\}$ (\Cref{S-prelim}), both $f$ and $x\mapsto x f(x)$ are non-decreasing on $[1,\infty)$, $f(x)\ge 1$, and $f(x)\le\log x$ (as $\log\log x\le\log x$).
We work on the same intersection of events as in the previous paragraph.
If $H_v=0$ then $|\Ball(v)|=Z_v\le K'$, so by monotonicity $|\Ball(v)|f(|\Ball(v)|)\le Z_v\,f(K')$; if $H_v=1$ then $|\Ball(v)|\le K$, so $|\Ball(v)|f(|\Ball(v)|)\le Kf(K)\cdot H_v$. Summing over $v\in V$,
$$\sum_{v\in V}|\Ball(v)|\,f(|\Ball(v)|)\;\le\;f(K')\sum_{v\in V}Z_v\;+\;Kf(K)\sum_{v\in V}H_v.$$
For the first term, $\log K'\le\log k+\log(13\ln k)=\log k+O(\log\log k)=O(\log k)$, and consequently $\log\log K'=O(\log\log k)$; hence $f(K')=O\!\left(\log^{1/2}(k)\log\log^{1/2}(k)\right)=O(f(k))$, and the first term is $O\!\left(nk\,f(k)\right)$.
For the second term, $K=O(k\log n)=O(\log^{2}n)$, so $f(K)\le\log K=O(\log\log n)$, and therefore
$$Kf(K)\sum_{v\in V}H_v=O\!\left(k\log n\cdot\log\log n\cdot\frac{n}{\log^{3}n}\right)=O\!\left(\frac{nk\log\log n}{\log^{2}n}\right)=o(nk)=O\!\left(nk\,f(k)\right).$$
Combining the two terms gives
$$\sum_{v\in V}|\Ball(v)|\log^{1/2}|\Ball(v)|\log\log^{1/2}|\Ball(v)|=O\!\left(nk\log^{1/2}(k)\log\log^{1/2}(k)\right),$$
as required.

Finally, a union bound over the failure events of all three properties gives total failure probability $O(n^{-c})$; since $c\ge 1$ was an arbitrary constant, all three properties hold with high probability.
\end{proof}

\section{From \texorpdfstring{$\BMSSP$}{BMSSP} to \texorpdfstring{$\BoundSSSP$}{BoundSSSP}}\label{A-BoundSSSP}

In this section we prove \Cref{L-Faster-D-SSSP}, deriving $\BoundSSSP$ from the recursive procedure $\BMSSP$ of~\cite{duan2026faster}.
\begin{reminder}{Reminder of \Cref{L-Faster-D-SSSP}}
    Let $G$ be a weighted directed graph with maximum degree $\Delta\ge 1$, let $v\in V$, let $B$ be a bound, and let $U=\{w\mid \delta(v,w)\prec B\}$.
    After a single global initialization taking $O(n)$ time, there is an algorithm $\BoundSSSP(v,B)$ that computes $\delta(v,w)$ for every $w\in U$ in $O(\Delta\,|U|\log^{1/2}{|U|}\log\log^{1/2}{|U|})$ time.
    In particular, on a graph preprocessed by the degree reduction of \Cref{S-degree-reduction}, in which $\Delta=m/n$, the running time is $O(\frac{m}{n}|U|\log^{1/2}{|U|}\log\log^{1/2}{|U|})$.
\end{reminder}
Throughout, $G$ is a directed graph with maximum degree $\Delta\ge1$. In our application, this degree bound is guaranteed by the reduction of \Cref{S-degree-reduction}; we view the resulting undirected graph as directed by replacing each edge with two opposite arcs, preserving distances and the degree bound up to a constant factor.
We fix the source $v$ and the bound $B$, and write $U=\{w\in V\mid\delta(v,w)\prec B\}$ for the target set.

\subsection{The interface of \texorpdfstring{$\BMSSP$}{BMSSP}}

As in~\cite{duan2026faster}, an execution maintains a label $d[w]$ for every $w\in V$, where each finite $d[w]$ represents a concrete path from $v$ to $w$, so that $\delta(v,w)\preceq d[w]$; the vertex $w$ is \emph{complete} when $d[w]=\delta(v,w)$.
For a bound $B$ and a set $S\subseteq V$, the \emph{target} of the pair is
$$\widetilde U(B,S)=\{w\in V\mid \delta(v,w)\prec B\text{ and } P(v,w)\cap S\neq\emptyset\},$$
and a pair $\langle X,Y\rangle$ of vertex sets is a \emph{frontier} for a set $Z$ if every $z\in Z$ either belongs to $X$ and is complete, or has a complete vertex of $Y$ on $P(v,z)$.
The recursion of~\cite{duan2026faster} is controlled by an integer parameter $t\ge2$ and a level $\ell\ge0$; we write
$$\Lambda_\ell=t^3 2^{\ell t}$$
for the workload of a level-$\ell$ call.
With this notation, Lemma 3.1 of~\cite{duan2026faster} reads as follows.

\begin{lemma}[{$\BMSSP$; Lemma~3.1 of~\cite{duan2026faster}}]\label{L-BMSSP}
    Let $t\ge2$, let $0\le\ell\le\ceil{\log n/t}$, let $B$ be a bound, and let $S\subseteq V$ with $|S|\le t^2 2^{\ell t}$ be such that $\langle\emptyset,S\rangle$ is a frontier for $\widetilde U(B,S)$.
    Then $\BMSSP(B,S,\ell)$ returns a triple $(B',U',\mathcal D)$ with $B'\preceq B$ such that:
    \begin{enumerate}
        \item\label{I-bmssp-complete} $U'=\widetilde U(B',S)\subseteq\widetilde U(B,S)$ and every $w\in U'$ is complete, that is, $d[w]=\delta(v,w)$;
        \item\label{I-bmssp-size} $|U'|=O(\Lambda_\ell)$;
        \item\label{I-bmssp-full} if $B'=B$ (a \emph{full execution}) then $\mathcal D=\emptyset$ and $U'=\widetilde U(B,S)$;
        \item\label{I-bmssp-partial} if $B'\prec B$ (a \emph{partial execution}) then $|U'|=\Theta(\Lambda_\ell)$; we write $c_0\in(0,1]$ for the absolute constant such that a partial execution always returns $|U'|\ge c_0\Lambda_\ell$.
    \end{enumerate}
    The call runs in $O\bigl(|U'|(\ell\log t+\Delta t)\bigr)$ time.
\end{lemma}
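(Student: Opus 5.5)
This lemma is a restatement of Lemma~3.1 of~\cite{duan2026faster} in our notation, so the plan is to \emph{translate} rather than re-prove. I would set up an explicit dictionary between their parameters and ours, then check that each hypothesis and each conclusion maps onto the corresponding item above.

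\textbf{Dictionary and hypotheses.} First, I would identify their recursion parameter and per-level workload with our $t$ and $\Lambda_\ell=t^3 2^{\ell t}$. I would also identify their source-set size bound with $|S|\le t^2 2^{\ell t}$, and their maximum recursion depth with $\ell\le\lceil \log n/t\rceil$. Second, I would check that their notion of a complete vertex matches ours, and likewise for their frontier precondition on $\widetilde U(B,S)$; this is where the total order $\prec$ of \Cref{S-positive-weights} enters. Our $\prec$ is exactly the tie-breaking order of~\cite[Section~2.3]{duan2026faster}, and bounds $B$ are compared in that same order. Consequently their strict inequality ``$<B$'' becomes ``$\prec B$'', and \Cref{P-shortest-paths} supplies the uniqueness and prefix properties their proof relies on.

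\textbf{Conclusions.} Items~\ref{I-bmssp-complete}--\ref{I-bmssp-partial} are then read off directly from their lemma. Their statement gives, in order: $U'$ is the set of target vertices below $B'$ and all of them are complete; $|U'|=O(\Lambda_\ell)$; a full execution returns everything and leaves the pending set empty; and a partial execution has size $\Theta(\Lambda_\ell)$. For the last item I would extract the explicit lower bound $c_0\Lambda_\ell$ from their termination rule: a partial return happens only once the number of completed vertices exceeds a fixed constant multiple of the workload parameter. I would then fix $c_0$ to be that constant, capped at $1$.

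\textbf{Running time, the main obstacle.} Their bound is stated for constant (or $m/n$) degree, while we need the explicit dependence on $\Delta$. I would redo their charging argument with $\Delta$ kept symbolic. Each vertex of $U'$ appears in at most $\ell$ levels of the recursion and pays $O(\log t)$ per level for the partial-sorting data-structure operations, which gives the $|U'|\ell\log t$ term. Separately, each vertex has its at most $\Delta$ outgoing edges relaxed $O(t)$ times: the $t$ Bellman--Ford-style steps of pivot finding at its level, plus the constant-many relaxations during recursion returns. This gives the $|U'|\Delta t$ term. The delicate part is checking that relaxations touching vertices outside $U'$ are also charged to $U'$. I would handle this by charging each such relaxation to the tail endpoint, which lies in $U'$, so the total cost stays proportional to $|U'|\Delta$ times the number of relaxation rounds.
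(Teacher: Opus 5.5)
Your plan to treat the lemma as an imported result is exactly what the paper does. It gives no proof of its own: it says that, in this notation, Lemma~3.1 of \cite{duan2026faster} reads as displayed, and it takes the running time $O(|U'|(\ell\log t+\Delta t))$, including the degree bound $\Delta$, as part of the cited statement. Your dictionary for $t$, $\ell$, $\Lambda_\ell=t^3 2^{\ell t}$ and the order $\prec$ matches this. So does your reading of $c_0$ as just a name for the constant hidden in the $\Theta(\Lambda_\ell)$ of a partial execution. Up to that point your proposal is fine.

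The paragraph you call the main obstacle goes beyond the paper, and it would not work as sketched.

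First, ``$O(t)$ relaxation rounds per vertex, from the pivot-finding steps at its level'' assumes that each vertex undergoes Bellman--Ford-style pivot finding at only one level. That is not justified. In the \texttt{BMSSP} recursion of \cite{duan2025breaking}, on which \cite{duan2026faster} builds, pivot finding runs at the start of every non-base call. A vertex can therefore be explored by the calls at every level on its recursion path, and straightforward accounting gives $O(\ell\,\Delta t)$ per vertex. Since $\ell t$ can be as large as about $\log n$, this is precisely the cost that must be avoided for the lemma to beat sorting. The $\Delta t$ term is thus the crux of \cite{duan2026faster}, and you cannot recover it without that paper's specific mechanism.

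Second, charging each relaxation to its tail ``which lies in $U'$'' fails in a partial execution. Pivot finding can relax edges out of vertices $w$ with $B'\preceq\delta(v,w)\prec B$, and these are not in $U'=\widetilde U(B',S)$. Such work has to be paid for by comparing sizes, not endpoints. In \cite{duan2025breaking}, the explored set is capped at $k|S|$, and its $O(k^2|S|)$ cost is absorbed by $|U'|=\Omega(k\,2^{\ell t})$ in a partial execution. Here the corresponding slack is $c_0\Lambda_\ell\ge c_0 t\,|S|$.

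So either drop this paragraph and rely on the citation, as the paper does, or reproduce the source's actual charging argument. Note also that the paper only uses the lemma with $\Delta=m/n$, which is $O(1)$ in its main regime $m=O(n)$. The explicit $\Delta$-dependence is therefore not an obstacle for how the lemma is used.
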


We use two properties of \Cref{L-BMSSP}.
First, only a full execution certifies that the whole target has been computed. This is the reason~\cite{duan2026faster} invoke $\BMSSP$ at the top level with $S=\{s\}$, $B=\infty$ and $\ell=\ceil{\log n/t}$, a level at which $\Lambda_\ell\ge t^3n$ is sufficiently large and therefore a partial execution is impossible by item~\ref{I-bmssp-partial} of \Cref{L-BMSSP}.

Second, the running time is governed by the two competing terms $\ell\log t$ and $\Delta t$: taking $\ell$ as small as possible while still forcing a full execution, and then balancing the two terms, is what produces the bound of \Cref{L-Faster-D-SSSP}.
Using the top-level choice $\ell=\ceil{\log n/t}$ would make the running-time bound depend on $\log n$. In our application, $U=\Ball(v)\setminus\{b(v)\}$ can be much smaller than $V$, so we choose the parameters to obtain a bound in terms of $|U|$ instead.

\subsection{The algorithm}
Since $|U|$ is unknown when $\BoundSSSP(v,B)$ is invoked, we search for the right parameters, doubling $t$ and increasing $\ell$ until a full execution occurs. See \Cref{alg:bound-sssp}.

\begin{algorithm2e}[t]
\caption{$\BoundSSSP(v,B)$}\label{alg:bound-sssp}
\DontPrintSemicolon
$t_0\leftarrow\max\{2,\ceil{(2/c_0)^{1/3}}\}$\tcp*{a constant; $c_0$ is the constant of item~\ref{I-bmssp-partial} of \Cref{L-BMSSP}}
\For{$j\leftarrow 0,1,2,\dots$}{
    $t\leftarrow 2^{j}t_0$\;
    $L\leftarrow\min\left\{\ceil{\frac{\Delta t}{\log t}},\ \ceil{\frac{\log n}{t}}\right\}$\;
    \For{$\ell\leftarrow 0$ \KwTo $L$}{
        restore $d[\cdot]$ to its initial state: $d[v]\leftarrow0$ and $d[w]\leftarrow\infty$ for $w\neq v$\;
        $(B',U',\mathcal D)\leftarrow\BMSSP(B,\{v\},\ell)$ \tcp*{run with parameter $t$}
        \lIf{$B'=B$}{\Return $\{(w,d[w])\}_{w\in U'}$}
    }
}
\end{algorithm2e}

The restoration in the body of the loop is not performed by scanning $V$, which would cost $\Omega(n)$ per call and $\Omega(n^2)$ over the $n$ invocations of $\BoundSSSP$ in \Cref{S-construction}.
Instead, the array $d[\cdot]$ is allocated and initialized to $\infty$ once, in $O(n)$ time. Before each call to $\BMSSP$, we set $d[v]=0$ for the current source $v$ and record all entries modified by the call, including $d[v]$. After copying any output distances, we reset these entries to $\infty$, at $O(1)$ time per modification, so the same array can be reused for the next call, even with a different source.
As each modification is charged to the call that made it, this at most doubles the running time of every call, and the $O(n)$ initialization is paid once, as stated in \Cref{L-Faster-D-SSSP}.

\subsection{Correctness}

\begin{claim}\label{C-precondition}
    Every call $\BMSSP(B,\{v\},\ell)$ made by \Cref{alg:bound-sssp} meets the preconditions of \Cref{L-BMSSP}, and $\widetilde U(B,\{v\})=U$.
\end{claim}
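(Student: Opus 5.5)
We check the preconditions of \Cref{L-BMSSP} one at a time for the call $\BMSSP(B,\{v\},\ell)$ made with the current parameter $t=2^jt_0$ and level $\ell\le L$. Then we identify the target set.

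\emph{Parameter ranges.} Since $t_0\ge 2$, we have $t=2^jt_0\ge 2$. By the definition of $L$ in \Cref{alg:bound-sssp}, $L\le\ceil{\log n/t}$. The inner loop runs over $0\le\ell\le L$, so $0\le\ell\le\ceil{\log n/t}$, as \Cref{L-BMSSP} requires.

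\emph{Size of the source set.} The source set is $S=\{v\}$, so $|S|=1\le t^2\le t^22^{\ell t}$.

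\emph{Frontier condition.} Here we use the restoration step executed immediately before the call. It sets $d[v]=0$, and the bookkeeping described after \Cref{alg:bound-sssp} resets every other entry to $\infty$. Since $\delta(v,v)=0$, the vertex $v$ is complete. For every $w\in\widetilde U(B,\{v\})$, the path $P(v,w)$ starts at $v$, so it contains the complete vertex $v\in Y=\{v\}$. Therefore $\langle\emptyset,\{v\}\rangle$ is a frontier for $\widetilde U(B,\{v\})$.

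\emph{Identifying the target.} Every path $P(v,w)$ starts at $v$, so $P(v,w)\cap\{v\}\neq\emptyset$ always holds. The second condition in the definition of $\widetilde U$ is therefore vacuous, and $\widetilde U(B,\{v\})=\{w\mid\delta(v,w)\prec B\}=U$.

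The only delicate point is the frontier condition, since it depends on the state of $d[\cdot]$ at call time. The concern is that labels left over from a previous (partial) call, possibly from a different source, might survive. I will argue that the lazy reset restores exactly the initial state: all entries modified by the previous call are set back to $\infty$, and then $d[v]\leftarrow 0$ is assigned. It also needs to hold that a label of $\infty$ trivially represents no path. With these two facts, the invariant $\delta(v,w)\preceq d[w]$ holds at call time.
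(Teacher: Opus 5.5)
Your proposal is correct and follows the paper's proof essentially step by step. The target is identified via $v\in P(v,w)$. The frontier $\langle\emptyset,\{v\}\rangle$ follows from completeness of $v$ in the restored state $d[v]=0$. The bounds are $|\{v\}|=1\le t^2 2^{\ell t}$ and $\ell\le L\le\ceil{\log n/t}$. Your explicit checks that $t\ge 2$ and that the lazy reset reproduces the initial label state are harmless additions that the paper leaves implicit in the phrase ``initial label state.''
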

\begin{proof}
    Every path starting at $v$ contains $v$, so $\widetilde U(B,\{v\})=\{w\mid\delta(v,w)\prec B\}=U$.
    The call is made from the initial label state, in which $d[v]=0=\delta(v,v)$, so $v$ is complete; hence for every $z\in\widetilde U(B,\{v\})$ the path $P(v,z)$ contains the complete vertex $v$ of $\{v\}$, i.e., $\langle\emptyset,\{v\}\rangle$ is a frontier for $\widetilde U(B,\{v\})$.
    Finally, $|\{v\}|=1\le t^2 2^{\ell t}$, and $\ell\le L\le\ceil{\log n/t}$ by the definition of $L$.
\end{proof}

\begin{claim}\label{C-forced-full}
    A call $\BMSSP(B,\{v\},\ell)$ with $c_0\Lambda_\ell>|U|$ is a full execution.
    In particular, the call at level $\ell=\ceil{\log n/t}$ is a full execution, for every $t\ge t_0$.
\end{claim}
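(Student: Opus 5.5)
The plan is to argue by contradiction, using item~\ref{I-bmssp-partial} of \Cref{L-BMSSP}. By \Cref{C-precondition}, every such call satisfies the preconditions of \Cref{L-BMSSP} and has target $\widetilde U(B,\{v\})=U$. Suppose a call with $c_0\Lambda_\ell>|U|$ were a partial execution, i.e., $B'\prec B$. Item~\ref{I-bmssp-partial} then gives $|U'|\ge c_0\Lambda_\ell$. Item~\ref{I-bmssp-complete} gives $U'=\widetilde U(B',\{v\})\subseteq\widetilde U(B,\{v\})=U$, so $|U'|\le|U|$. Together these yield $c_0\Lambda_\ell\le|U|$, contradicting the hypothesis. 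Hence $B'=B$, and the call is a full execution.

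For the second sentence, it suffices to check $c_0\Lambda_\ell>|U|$ at $\ell=\ceil{\log n/t}$ whenever $t\ge t_0$. We have $\ell t\ge\log n\ge\log_2 n$, so $2^{\ell t}\ge n$ and $\Lambda_\ell=t^32^{\ell t}\ge t^3n$. By the choice $t_0\ge(2/c_0)^{1/3}$, every $t\ge t_0$ satisfies $t^3\ge 2/c_0$. Therefore $c_0\Lambda_\ell\ge c_0t^3n\ge 2n>n\ge|U|$, since $U\subseteq V$. This level is also admissible in \Cref{L-BMSSP}, as $\ell\le\ceil{\log n/t}$ holds trivially. The first part then applies.

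One point needs care: the claim refers to the level $\ell=\ceil{\log n/t}$ itself, not only to a level actually reached in the loop of \Cref{alg:bound-sssp}. Since $L$ may be the smaller quantity $\ceil{\Delta t/\log t}$, the statement should be read as "if such a call is made, or whenever that level is reached, it is full." The argument above does not depend on $L$, so either reading is covered.

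I do not expect a real obstacle here. The only subtle point is the inclusion $U'\subseteq U$ in the partial case, and it follows directly from the monotonicity of $\widetilde U$ in its bound argument, as stated in item~\ref{I-bmssp-complete}.
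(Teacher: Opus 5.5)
Your proposal is correct and follows the paper's proof essentially verbatim: a partial execution would give $c_0\Lambda_\ell\le|U'|\le|U|$ by the two items of \Cref{L-BMSSP}, a contradiction, and at $\ell=\ceil{\log n/t}$ the bound $\Lambda_\ell\ge t^3n$ together with $c_0t^3\ge c_0t_0^3\ge 2$ gives $c_0\Lambda_\ell\ge 2n>|U|$. Your remark about admissibility of the level is a harmless extra check the paper leaves implicit.
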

\begin{proof}
    Suppose the call is partial. By item~\ref{I-bmssp-partial} of \Cref{L-BMSSP} it returns $|U'|\ge c_0\Lambda_\ell>|U|$, while by item~\ref{I-bmssp-complete} of \Cref{L-BMSSP} and \Cref{C-precondition} we have $U'\subseteq\widetilde U(B,\{v\})=U$, a contradiction.
    For the second part, at $\ell=\ceil{\log n/t}$ we have $\Lambda_\ell\ge t^3 2^{\log n}=t^3n$, so $c_0\Lambda_\ell\ge c_0t_0^3n\ge 2n>|U|$ by the choice of $t_0$.
\end{proof}

By \Cref{C-forced-full}, any phase with $L=\ceil{\log n/t}$ returns by level $L$. The outer loop therefore terminates: for sufficiently large $t$, we have $\ceil{\Delta t/\log t}\ge\ceil{\log n/t}$, so $L=\ceil{\log n/t}$.
When the algorithm returns, it does so from a full execution, so by item~\ref{I-bmssp-full} of \Cref{L-BMSSP} and \Cref{C-precondition} we get $U'=\widetilde U(B,\{v\})=U$, and by item~\ref{I-bmssp-complete} of \Cref{L-BMSSP} every $w\in U'$ satisfies $d[w]=\delta(v,w)$.
Hence \Cref{alg:bound-sssp} outputs exactly $\delta(v,w)$ for every $w\in U$, as required.

\subsection{Running time}

We first bound the cost of a single phase.

\begin{claim}\label{C-phase-cost}
    Consider a phase $j$ of \Cref{alg:bound-sssp}, with parameter $t=2^jt_0$, and let $\ell^\ast$ be the last level executed in it. Then the phase costs $O\bigl(\Lambda_{\ell^\ast}\,\Delta t\bigr)$ time; moreover, if the phase returns, it costs $O\bigl(|U|\,\Delta t\bigr)$ time.
\end{claim}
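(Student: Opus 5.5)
The plan is to bound each $\BMSSP$ call in phase $j$ by the running time of \Cref{L-BMSSP} and then sum over the levels $\ell=0,\dots,\ell^\ast$.

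\textbf{Cost of a single call.} A call at level $\ell$ costs $O\bigl(|U'|(\ell\log t+\Delta t)\bigr)$, and we have $|U'|=O(\Lambda_\ell)$ by item~\ref{I-bmssp-size}. Every executed level satisfies $\ell\le L\le\ceil{\Delta t/\log t}$, so $\ell\log t=O(\Delta t+\log t)=O(\Delta t)$, using $\Delta\ge1$ and $\log t\le t$. Hence the level-$\ell$ call costs $O(\Lambda_\ell\Delta t)$. The reset of $d[\cdot]$ is charged to the modifications made by the call, so it is absorbed into this bound.

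\textbf{Summing over levels.} Since $\Lambda_{\ell+1}/\Lambda_\ell=2^t\ge4$, the sequence $\Lambda_0,\dots,\Lambda_{\ell^\ast}$ grows geometrically. Therefore $\sum_{\ell\le\ell^\ast}\Lambda_\ell=O(\Lambda_{\ell^\ast})$, and the phase costs $O(\Lambda_{\ell^\ast}\Delta t)$. This proves the first part.

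\textbf{The returning phase.} This is the main obstacle: we must turn the bound in $\Lambda_{\ell^\ast}$ into a bound in $|U|$.
\begin{itemize}
\item The final call is a full execution, and by item~\ref{I-bmssp-full} and \Cref{C-precondition} it returns $U'=U$. So the final call alone costs $O\bigl(|U|(\ell^\ast\log t+\Delta t)\bigr)=O(|U|\Delta t)$, by the single-call bound above.
\item Every earlier level $\ell<\ell^\ast$ in this phase was a partial execution. By item~\ref{I-bmssp-partial} its output satisfies $|U'|\ge c_0\Lambda_\ell$, and by item~\ref{I-bmssp-complete} together with \Cref{C-precondition} it satisfies $U'\subseteq U$. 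Hence $\Lambda_\ell\le|U|/c_0$ for every $\ell<\ell^\ast$.
\item By the geometric growth, $\sum_{\ell<\ell^\ast}\Lambda_\ell=O(\Lambda_{\ell^\ast-1})=O(|U|)$. So the earlier calls cost $O(|U|\Delta t)$ in total, using again the per-call bound $O(\Lambda_\ell\Delta t)$.
\item If $\ell^\ast=0$, only the final call occurs, and the bound already follows from the first item.
\end{itemize}
Adding the two contributions, a returning phase costs $O(|U|\Delta t)$.

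Alternatively, one can bound the full final call directly by its actual output size $|U'|=|U|$, rather than by $\Lambda_{\ell^\ast}$. This avoids needing any lower bound on $|U|$ in terms of $\Lambda_{\ell^\ast}$, and it is the step that requires using the actual-output form of the running time in \Cref{L-BMSSP}.
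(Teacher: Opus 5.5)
Your proposal is correct and matches the paper's proof step for step. You bound each call by $\ell\log t+\Delta t=O(\Delta t)$ using $\ell\le\lceil\Delta t/\log t\rceil$, sum the geometrically growing workloads $\Lambda_\ell$, and, for a returning phase, charge the final full call to $|U'|=|U|$ and each earlier partial call to $\Lambda_\ell\le|U|/c_0$. Your closing ``alternative'' remark is not a separate route; it is what your first bullet, and the paper, already do.
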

\begin{proof}
    By \Cref{L-BMSSP}, the call at level $\ell$ costs $O(|U'_\ell|(\ell\log t+\Delta t))$ time, where $U'_\ell$ is the set it returns, and by the definition of $L$,
    $$\ell\log t+\Delta t\;\le\;L\log t+\Delta t\;\le\;\left(\frac{\Delta t}{\log t}+1\right)\log t+\Delta t\;=\;2\Delta t+\log t\;=\;O(\Delta t),$$
    where we used $\log t\le t\le\Delta t$.
    The phase therefore costs $O\bigl(\Delta t\sum_{\ell\le\ell^\ast}|U'_\ell|\bigr)$.
    Since $t\ge2$, the workloads grow geometrically:
    $$\sum_{\ell=0}^{\ell^\ast}\Lambda_\ell=t^3\sum_{\ell=0}^{\ell^\ast}2^{\ell t}\le\frac{2^t}{2^t-1}\,\Lambda_{\ell^\ast}\le 2\Lambda_{\ell^\ast}.$$
    Combining this with $|U'_\ell|=O(\Lambda_\ell)$ (item~\ref{I-bmssp-size} of \Cref{L-BMSSP}) gives the first bound.
    For the second bound, suppose the phase returns at level $\ell^\ast$. If $\ell^\ast=0$, the only call is a full execution and costs $O(|U|\Delta t)$. Otherwise, every call at a level $\ell<\ell^\ast$ was then a partial execution, so items~\ref{I-bmssp-partial} and~\ref{I-bmssp-complete} of \Cref{L-BMSSP} give $c_0\Lambda_\ell\le|U'_\ell|\le|\widetilde U(B,\{v\})|=|U|$, and in particular $\Lambda_\ell\le|U|/c_0$ for every $\ell<\ell^\ast$. Hence
    $$\sum_{\ell\le\ell^\ast}|U'_\ell|\;=\;|U'_{\ell^\ast}|+\sum_{\ell<\ell^\ast}|U'_\ell|\;=\;|U|+O\Bigl(\sum_{\ell<\ell^\ast}\Lambda_\ell\Bigr)\;=\;|U|+O(\Lambda_{\ell^\ast-1})\;=\;O(|U|),$$
    where $|U'_{\ell^\ast}|=|U|$ because the last call is a full execution.
\end{proof}

\begin{claim}\label{C-last-phase}
    Let $J$ be the phase in which \Cref{alg:bound-sssp} returns. Then $t_J:=2^Jt_0=O\left(1+\sqrt{\frac{\log|U|\log\log|U|}{\Delta}}\right)$.
\end{claim}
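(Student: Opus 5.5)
The plan is to show that some phase whose $t$ is of order $\sqrt{\log|U|\log\log|U|/\Delta}$ (or constant) is guaranteed to return. Since phases double $t$, the phase $J$ at which the algorithm stops is at most that phase, so $t_J$ is at most twice the corresponding target value. Concretely, I fix
\[
t^\ast:=\max\Bigl\{t_0,\ \Bigl\lceil C\sqrt{\tfrac{\log|U|\,\log\log|U|}{\Delta}}\Bigr\rceil\Bigr\}
\]
for a suitable constant $C$. I let $j^\ast$ be the first phase with $2^{j^\ast}t_0\ge t^\ast$, so that $t:=2^{j^\ast}t_0\le 2t^\ast$. Then I show that phase $j^\ast$ (if reached) returns. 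This gives $J\le j^\ast$ and $t_J\le 2t^\ast$, which is the claimed bound.

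To show that phase $j^\ast$ returns, I use \Cref{C-forced-full}: it suffices to exhibit a level $\ell\le L$ with $c_0\Lambda_\ell>|U|$. Since the loop runs over $\ell=0,\dots,L$ and returns at the first full execution, a return then happens by that level at the latest. I take the smallest $\ell_0$ with $c_0t^32^{\ell_0 t}>|U|$. Then $\ell_0\le\lceil\log(|U|/c_0)/t\rceil$, which is at most $O(\log|U|/t)+1$; I use $|U|\ge1$, and the convention $\log x\ge1$ absorbs constants.

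It remains to check that $\ell_0\le L=\min\{\lceil\Delta t/\log t\rceil,\lceil\log n/t\rceil\}$. The second term is easy: $|U|\le n$, and $c_0t^3\ge c_0t_0^3\ge2$ by the choice of $t_0$, so $\ell_0\le\lceil\log n/t\rceil$ (alternatively, this case is exactly the second part of \Cref{C-forced-full}). For the first term, I need $\log|U|/t\lesssim\Delta t/\log t$, that is,
\[
t^2\gtrsim\frac{\log|U|\,\log t}{\Delta}.
\]
With $t\ge t^\ast$, I have $t^2\ge C^2\log|U|\log\log|U|/\Delta$. It therefore suffices to show $\log t=O(\log\log|U|)$, using $\Delta\ge1$ and $t\le 2t^\ast=O(1+\sqrt{\log|U|\log\log|U|})$. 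The additive ``$+1$'' in the bound on $\ell_0$ is handled by the ceiling in $\lceil\Delta t/\log t\rceil\ge1$, together with choosing $C$ large enough.

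The main obstacle is making this inequality chain rigorous, with constants, in the degenerate regimes: $|U|$ tiny (where $t^\ast=t_0$ and $\ell_0$ may be $0$ or $1$) and $\Delta$ large relative to $\log|U|$. There I would split into two cases. If $\log|U|\le\Delta t_0^2/\log t_0$, a level $\ell_0\le O(1)\le L$ already works at $t=t_0$, so $J=0$. Otherwise the main computation above applies.

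A final subtlety is that phase $j^\ast$ might never be reached because an earlier phase returned. This only helps, since then $J<j^\ast$ and $t_J<t$.
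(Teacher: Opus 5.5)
Your proof is correct, but it runs in the opposite direction from the paper's. The paper argues backward from phase $J-1$, which did not return. By \Cref{C-forced-full}, its last level must be $L=\lceil\Delta t/\log t\rceil$, and the call there must have been partial. So $c_0t^32^{Lt}\le|U|$, and hence $\Delta t^2/\log t\le Lt\le\log(|U|/c_0)$. Since the left-hand side is at least $\Delta t\ge t$, this bootstraps to $t=O(\log|U|)$, hence $\log t=O(\log\log|U|)$ with no free constant, and then $t_J=2t$.

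You argue forward instead. You guess a threshold $t^\ast$ and exhibit, in the first phase with $t\ge t^\ast$, a level $\ell_0\le L$ with $c_0\Lambda_{\ell_0}>|U|$, checking both branches of the minimum that defines $L$. This is the contrapositive of the same inequality. It has the merit of naming a phase that is guaranteed to succeed, but it costs constant-chasing that the backward argument avoids. Specifically, your step ``it suffices to show $\log t=O(\log\log|U|)$'' can only use the upper bound $t\le 2t^\ast$, so its implied constant grows like $\log C$. What you actually need is that $C^2$ dominates a constant multiple of $1+\log C$. That holds for large $C$, but you should say so explicitly.

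Two simplifications are available:
\begin{enumerate}
\item Since $c_0t^3\ge c_0t_0^3\ge 2$, you get $\ell_0\le\lceil\log_2|U|/t\rceil$ directly for $|U|\ge1$, which removes the $\log(1/c_0)$ term.
\item Your separate degenerate-case analysis is unnecessary, and its ``$\ell_0\le O(1)\le L$'' is not literally justified as written. Because $t\ge t^\ast\ge t_0$ and the paper's convention gives $\log|U|,\log\log|U|\ge 1$, the main computation is valid in all cases. That includes $U=\emptyset$, which does occur (for $v\in S$, where $B=\delta(v,v)$) and where $\ell_0=0$.
\end{enumerate}
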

\begin{proof}
    If $J=0$ then $t_J=t_0=O(1)$ and there is nothing to prove, so assume $J\ge1$ and let $t=2^{J-1}t_0$ be the parameter of the preceding phase, which did not return; let $L$ be its last level.
    By \Cref{C-forced-full}, if $L=\ceil{\log n/t}$ then that phase would have returned, so $L=\ceil{\Delta t/\log t}$ and the call at level $L$ was a partial execution.
    By items~\ref{I-bmssp-partial} and~\ref{I-bmssp-complete} of \Cref{L-BMSSP}, $c_0\Lambda_L\le|U'_L|\le|U|$, that is, $t^3 2^{Lt}\le|U|/c_0$, whence
    $$\frac{\Delta t^2}{\log t}\;\le\;Lt\;\le\;\log\frac{|U|}{c_0}\;=\;O(\log|U|),$$
    using the convention $\log x=\max\{\log_2 x,1\}$ of \Cref{S-prelim}.
    Since $t\ge\log t$ for every $t\ge1$, the left-hand side is at least $\Delta t\ge t$, so $t=O(\log|U|)$ and therefore $\log t=O(\log\log|U|)$.
    Substituting back, $\Delta t^2=O(\log|U|\log\log|U|)$, i.e., $t=O\bigl(\sqrt{\log|U|\log\log|U|/\Delta}\bigr)$, and $t_J=2t$.
\end{proof}

We now combine the phase bounds to prove \Cref{L-Faster-D-SSSP}.
Let $J$ be the phase in which the algorithm returns.
Every phase $j<J$ ran all of its levels without returning, so its last level was $L$ and, as in the proof of \Cref{C-last-phase}, the call at that level was partial and satisfied $\Lambda_L\le|U|/c_0$; by \Cref{C-phase-cost} that phase cost $O(|U|\Delta t_j)$ time.
Phase $J$ cost $O(|U|\Delta t_J)$ time, again by \Cref{C-phase-cost}.
Since the parameters $t_j=2^jt_0$ double, the total running time is
$$\sum_{j=0}^{J}O\bigl(|U|\Delta t_j\bigr)\;=\;O\bigl(|U|\Delta t_J\bigr)\;=\;O\Bigl(|U|\Delta+|U|\sqrt{\Delta\log|U|\log\log|U|}\Bigr),$$
by \Cref{C-last-phase}.
Finally, $\Delta\ge1$ implies $\sqrt{\Delta}\le\Delta$, and $\log^{1/2}|U|\log\log^{1/2}|U|\ge1$ by the convention of \Cref{S-prelim}, so both terms are bounded by $O\bigl(\Delta|U|\log^{1/2}|U|\log\log^{1/2}|U|\bigr)$, which proves \Cref{L-Faster-D-SSSP}.
\qed

\end{document}